\documentclass[11pt]{article}

\usepackage{amsmath}
\usepackage{amssymb}
\usepackage{fullpage}
\usepackage{color}
\usepackage{graphicx}
\usepackage{epsfig}
\usepackage{amsthm}
\usepackage{latexsym}
\usepackage{amssymb}
\usepackage{amsmath}
\usepackage{verbatim}
\usepackage{graphicx}
\usepackage{epsfig}
\usepackage{subfigure}
\usepackage{cite}
\usepackage{url}
\usepackage[table]{xcolor}
\usepackage{tabularx}
\usepackage{bbm}
\usepackage{algorithmic}
\usepackage{algorithm}
\usepackage{mathtools}

\linespread{1.1}\vfuzz2pt \hfuzz2pt

\usepackage{textcomp,setspace}



\usepackage{hyperref}
\usepackage[all]{hypcap}

\newtheorem{theorem}{Theorem}
\newtheorem{proposition}{Proposition}
\newtheorem{claim}{Claim}
\newtheorem{lemma}{Lemma}
\newtheorem{corollary}{Corollary}
\newtheorem{definition}{Definition} 

\newtheorem{remark}{Remark}

\DeclareMathOperator*{\argmax}{arg\,max}
\DeclareMathOperator*{\argmin}{arg\,min}

\newcommand{\E}{\mathbb{E}}
\newcommand{\A}{\mathcal{A}}
\newcommand{\Pa}{\mathcal{P}}
\newcommand{\Q}{\mathcal{Q}}
\newcommand{\R}{\mathcal{R}}

\newcommand{\ALG}{\textsc{AlgEG}}
\newcommand{\ALGS}{\textsc{RoundRobin}}
\newcommand{\ALGBL}{\textsc{AlgBL}}
\newcommand{\ALGC}{\textsc{AlgChores}}
\newcommand{\B}{\mathcal{B}}

\newcommand{\ALGF}{\textsc{AlgSub}}

\providecommand{\myfloor}[1]{\left \lfloor #1 \right \rfloor }

\providecommand{\customgenericname}{}

\begin{document}

\title{{\bfseries Approximation Algorithms for Maximin Fair Division}}
	\author{Siddharth Barman\thanks{Indian Institute of Science. {\tt barman@iisc.ac.in}} \and Sanath Kumar Krishnamurthy\thanks{Stanford University. {\tt sanathsk@stanford.edu}}} 
	\date{}
	\maketitle

\begin{abstract}
	We consider the problem of allocating indivisible goods \emph{fairly} among $n$ agents who have additive and submodular valuations for the goods. Our fairness guarantees are in terms of the \emph{maximin share}, that is defined to be the maximum value that an agent can ensure for herself, if she were to partition the goods into $n$ bundles, and then receive a minimum valued bundle. Since maximin fair allocations (i.e., allocations in which each agent gets at least her maximin share) do not always exist, prior work has focused on approximation results that aim to find allocations in which the value of the bundle allocated to each agent is (multiplicatively) as close to her maximin share as possible. In particular, Procaccia and Wang (2014) along with Amanatidis et al. (2015) have shown that under additive valuations a $2/3$-approximate maximin fair allocation always exists and can be found in polynomial time. We complement these results by developing a simple and efficient algorithm that achieves the same approximation guarantee. 

Furthermore, we initiate the study of approximate maximin fair division under \emph{submodular valuations}. Specifically, we show that when the valuations of the agents are \emph{nonnegative}, \emph{monotone}, and submodular, then a $0.21$-approximate maximin fair allocation is guaranteed to exist. In fact, we show that such an allocation can be efficiently found by using a simple round-robin algorithm. A technical contribution of the paper is to analyze the performance of this combinatorial algorithm by employing the concept of \emph{multilinear extensions}. 
\end{abstract}

\section{Introduction}

This problem of allocating goods among agents in a ``fair'' manner has been extensively studied in the economics and mathematics literature for over fifty years; the first mathematical treatment of fair division appears in the work of Steinhaus, Banach, and Knaster~\cite{first-attempt}. Such problems naturally arise in many real-world settings---e.g., in government auctions, border disputes, and divorce settlements---and a vast literature has been developed to address fair division under various modeling assumptions. Multiple solution concepts, such as \emph{envy-freeness}, \emph{proportionality}, and \emph{equitability}, have been proposed to formally capture fairness; see~\cite{brams1996fair} and~\cite{moulin2004fair} for excellent expositions. 

Even though there is a significant body of work aimed at understanding the division of \emph{divisible} goods,\footnote{Such goods model resources that can be fractionally assigned, e.g., land.} questions related to the fair division of \emph{indivisible} goods\footnote{An indivisible good represents a discrete resource which cannot be fractionally assigned among different agents, i.e., each such good must be assigned to a single agent.} are relatively underexplored. In particular, it is known that fair allocations of divisible goods necessarily exist (with respect to all the previously mentioned solution concepts), however, such guarantees do not hold for indivisible goods; for example, if we have two agents and a single indivisible good, then, under any allocation, the agent that does not get the good will \emph{envy} the other. Furthermore, in the case of indivisible goods, even multiplicative approximations of classical fairness notions (like envy-freeness and proportionality) cannot be achieved.

Motivated by these observations and the fact that many practical scenarios (such as course allocation at educational institutions~\cite{budish-approx-ceei}) inherently entail allocation of discrete resources, recent results have focussed on the problem of fair division of indivisible goods; see, e.g., \cite{budish-approx-ceei, bouveret-basic, EC-2/3, ICALP-2/3} and references therein. Specifically, this thread of research has considered notions of fairness\footnote{For example, \emph{EF$1$} \cite{budish-approx-ceei}, \emph{EFX} \cite{unreasonalble-nsw}, and \emph{maximin shares}~\cite{budish-approx-ceei}.} which are applicable to indivisible goods, and established existence and computational results for these solution concepts. 

One such fairness notion is the \emph{maximin share}, which is defined in the notable work of Budish \cite{budish-approx-ceei} (see also~\cite{moulin1990uniform}). Conceptually, maximin share can be interpreted through a discrete (indivisible) generalization of the standard \emph{cut-and-choose} protocol. In particular, this protocol is used to fairly divide a cake (i.e., a divisible good) between two agents: the first agent cuts the cake and then the second agent gets to select her favorite piece. Note that a risk-averse agent would cut the cake into two pieces that have equal value for her. In other words, the first agent will cut (partition) the cake to maximize the minimum value over the pieces in the cut (partition). The protocol leads to a fair allocation in terms of {envy-freeness}, i.e., it results in an allocation wherein each agent prefers her piece of the cake over the other agent's piece. 

Analogously, while allocating indivisible goods among $n$ agents, we can (hypothetically) ask an agent $i$ to partition the goods into $n$ bundles, and then the other agents get to pick a bundle before $i$. This could possibly lead to agent $i$ getting her least desired bundle in the partition. Hence, as in the cut-and-choose protocol, a risk-averse agent  would partition the goods so as to maximize the value of the least desirable bundle (according to her) in the partition. The value that agent $i$ can guarantee for herself by partitioning in this manner is defined to be agent $i$'s \emph{maximin share}. An allocation is said to be \emph{maximin fair} if, for every agent, the value of her bundle is at least as high as her maximin share.

Overall, the maximin share provides an intuitive threshold, and using it, we can deem an allocation to be fair if this threshold is met for every agent. Recent results in the computer science literature have considered whether maximin fair allocations exist and whether they can be efficiently computed. It turns out that the existence of a maximin fair allocation is not guaranteed~\cite{EC-2/3, kurokawa2016can}. However, this concept lends well to approximation guarantees: the work of Procaccia and Wang \cite{EC-2/3} shows that when the agents have \emph{additive valuations}, one can find an allocation wherein each agent gets a bundle of value greater than $2/3$ times her maximin share, i.e., a $2/3$-approximate maximin fair allocation always exists. The proof of existence by Procaccia and Wang is constructive, but it leads to a polynomial-time algorithm only when the number of agents is constant. Extending this result, Amanatidis et al. \cite{ICALP-2/3} have developed an algorithm that finds a $2/3$-approximate maximin fair allocation and runs in time polynomial in the number of agents. It is relevant to note that the approximation guarantees obtained in these results and the current paper are absolute, i.e., they guarantee that there \emph{always} exists an allocation wherein each agent receives a value at least a constant times her maximin share. The guarantees are not relative to the ``best-possible'' allocation.  

In the context of indivisible goods, the above-mentioned results affirm the relevance of maximin shares as a measure of fairness, but they are confined to additive valuations. In this paper, we extend this line of work and establish maximin fairness guarantees for both \emph{additive} and \emph{submodular} valuations.\footnote{These valuation classes are formally defined in Section~\ref{Notations}.} Given that submodular functions are extensively used in economics and computer science to model preferences, our fairness guarantee for submodular valuations is a compelling generalization of prior work. 

\subsection{Our Results and Techniques}
\noindent
{\bf Additive Valuations:} Our first result (Theorem~\ref{thm:main}) considers fair division of indivisible goods between agents with additive valuations for the goods. In particular, we show that the approximation guarantee obtained in~\cite{EC-2/3} and~\cite{ICALP-2/3} can be achieved by a simple,\footnote{Simple in the sense that our algorithm just entails sorting values and finding cycles in directed graphs. By contrast, the algorithm in~\cite{ICALP-2/3} uses as a subroutine a PTAS (of Woeginger \cite{woeginger1997polynomial}) which is based on solving integer linear programs of constant dimension, using Lenstra's algorithm~\cite{lenstra1983integer}.} combinatorial algorithm.\footnote{In fact, we are able to compute a $\frac{2n}{3n-1}$ approximate maximin fair allocation in polynomial time. This is slightly better than the $(\frac{2}{3}-\epsilon)$ approximation guarantee of Amanatidis et al. \cite{ICALP-2/3}.} That is, we present a polynomial-time algorithm that finds a $2/3$-approximate maximin fair allocation under additive valuations. 

Our algorithm for additive valuations consists of two parts. First (in Algorithm~\ref{alg:bouveret}: $\ALGBL$), we reduce the problem of finding a maximin fair allocation to a restricted setting where the agents value goods in the same order, i.e., in this constructed instance all the goods can be ordered (indexed) such that the valuation of every agent respects this ordering. This reduction was proposed by Bouveret and Lema\^{i}tre~\cite{bouveret-basic} and it preserves (approximate) maximin fairness in the sense that the $g$th good in the reduced instance corresponds to the right to select a good in round $g$. In particular, given an (approximately) maximin fair allocation, say $\A$, of the reduced instance one can construct a sequence $\mathcal{S}$ and use $\mathcal{S}$ to assign goods in the original instance: the agent with the $g$th good in $\A$ is at the $g$th position in $\mathcal{S}$ and, hence, gets to greedily select a good in the $g$th round. For additive valuations, this reduction ensures that if $\A$ is an $\alpha$-approximate maximin fair allocation in the reduced instance, then the allocation resulting from $\mathcal{S}$ is an $\alpha$-approximate maximin fair allocation
for the original instance. 

With this reduction in hand, we focus on this restricted (ordered) setting and develop an efficient algorithm (Algorithm~\ref{alg:envygraph}: $\ALG$) to find $2/3$-approximate maximin fair allocations for ordered instances. To obtain this result we rely on a careful modification of an algorithm by Lipton et al. \cite{lipton-envy-graph}, which, in particular, finds an allocation which is \emph{envy-free up to one good} (EF1); formal definitions appear in Section~\ref{Notations}. An arbitrary EF1 allocation might not provide any non-trivial approximation guarantees for general instances; see Appendix~\ref{example:ef1:mms} for an example. However, we show that for ordered instances an instantiation of the algorithm by Lipton et al.~\cite{lipton-envy-graph} (in a particular manner) finds an allocation with the desired maximin fairness guarantee.

The algorithm of Lipton et al.~\cite{lipton-envy-graph} employs a construct called the \emph{envy graph}. This directed graph captures envy between agents and using it goods are iteratively assigned to the agents. Throughout the assignment process the invariant that the envy graph is acyclic is explicitly maintained--if at any point of time an envy cycle is created, then the agents can swap their bundles around the cycle to monotonically increase their values and resolve the envy. The goods are assigned in order and each good is allocated to a currently unenvied agent; since the envy graph is acyclic such an agent (i.e., a source node) is guaranteed to exist. We show that for ordered instances this iterative process leads to an allocation that is, in fact, \emph{envy-free up to the least valued good} (EFX); see Definition~\ref{defn:efx}. EFX is a stronger notion than EF1 and it ensures that any agent who owns three or more goods is not envied beyond a multiplicative factor of $3/2$.\footnote{Note that, in the additive context, a multiplicative bound in terms of envy implies a matching bound with respect to the maximin shares as well.}  The rest of the proof entails a careful analysis of allocations in which each bundle has at most two goods. We perform this analysis using the concept of \emph{majorization} (see Section~\ref{section:majorization} for a definition) and, overall, show that a $2/3$-approximate maximin fair allocation can be computed efficiently when the agents' valuations are additive. \\

\noindent
{\bf Submodular Valuations:} We initiate the study of approximate maximin fair division under submodular valuations. Specifically, we show that when the valuations of the agents are \emph{nonnegative}, \emph{monotone}, and submodular, then a $0.21$-approximate maximin fair allocation is guaranteed to exist and, in fact, such an allocation can be efficiently computed via a simple round-robin algorithm (Theorem~\ref{thm:mms-const}). 

A technical contribution of the paper is to analyze the performance of this simple algorithm by employing the concept of \emph{multilinear extensions}. This concept has been used in recent results for constrained submodular maximization and, at a high level, it can be thought of as a continuous surrogate (a continuous extension) for a submodular function; see Section~\ref{sec:Multilinear} for a definition. It is observed in~\cite{vondrak-matroid} that the multilinear extension---of an agent's valuation function---achieves a high value if the agent is allocated a bundle at random. We use this observation to show that, for each of the $n$ agents, a bundle picked at random approximately satisfies the maximin requirement in expectation; specifically, for each agent $i$, the expected value of a random subset (wherein every good is included independently with probability $1/n$) is at least $(1 - \frac{1}{e})$ times the maximin share of $i$. However, to obtain the desired approximation guarantee from this fact we cannot directly apply standard techniques (such as {pipeage rounding or randomized  rounding}), since we need to satisfy the maximin requirement for multiple submodular functions simultaneously.\footnote{Note that, a direct application of randomized rounding leads to a logarithmic approximation guarantee. In particular, the concentration bounds for submodular functions~\cite{conc-submod} only establish the existence of a $\frac{1}{O(\log n)}$-approximate maximin fair allocation.} Instead, we employ a round-robin method with a preprocessing step. 

Since the problem of (exactly) computing the maximin share is {\rm NP}-hard,\footnote{The problem is computationally hard even for additive valuations and two agents, since the Partition problem reduces to it.} our algorithm works with estimates---$\tau_i$ for agent $i$---of the maximin shares. Each estimate $\tau_i$ is initialized to be greater than the corresponding maximin share and we reduce these estimates iteratively--the properties of our method ensure that we never make them significantly smaller than the maximin shares.  

In our algorithm we first perform a preprocessing step wherein high-valued goods are assigned as singleton bundles. In particular, if for an agent $i$ there exists an unallocated good of value at least $\frac{1}{3} \left( 1 - \frac{1}{e} \right) \tau_i \approx 0.21 \tau_i$, then we allocate the good to the agent and remove $i$ (and the now assigned good) from consideration. The leftover (low-valued) goods are partitioned among the remaining agents (who have not received a good yet) in a round-robin manner. Here, in each round, the participating agents greedily select an unallocated good one after the other, i.e., in each round, every agent $i$ is assigned a good (from the set of unallocated goods) that leads to the maximum possible increase in $i$'s valuation. 

Note that the stated fairness guarantee holds directly for agents that receive a high-valued good.  In addition, recall that, under submodular valuations, a uniformly random allocation is $(1 - 1/e)$-maximin fair, in expectation; specifically, if $G^0$ denotes the set of goods that remain unallocated after the preprocessing step and, for agent $i$, $E^0$ is the expected value of a random subset (sampled as mentioned above) of $G^0$, then $E^0$ is at least $(1-1/e)$ times $i$'s maximin share. Given this bound, we analyze the valuation of the bundles assigned to the agents (by the round-robin method) in terms of this expected value. 

Write $G^r$ to denote the set of goods that remain unallocated after the $r$th round of the round-robin phase and $E^r$ be the expected marginal value of a random subset of $G^r$. At a high level, we show that, in each round $r$ and for every agent $i$, the marginal gain experienced by $i$ (as a result of greedily selecting a good) is somewhat comparable to $E^r - E^{r+1}$. Since, for each agent $i$, the sum of the marginal gains (across all the rounds) is equal to the value of the bundle finally assigned to $i$, employing a telescoping sum over $E^r - E^{r+1}$ (and balancing out the parameters) we get that the round-robin algorithm allocates to each agent a bundle of value at least $\frac{1}{3} E^0$, i.e., a bundle of value at least $\frac{1}{3} \left(1 - \frac{1}{e} \right) \approx 0.21$ times her maximin share. 

\begin{remark}
Since additive functions are also submodular, the result developed for submodular valuations also applies to the additive case. However, for the additive setting we get a stronger approximation guarantee. It is also relevant to note that our algorithms---and their analysis---for these two valuations classes are quite distinct. 
\end{remark}
 
\noindent
{\bf Chores (negatively valued goods):}
We also show that a $4/3$-approximate maximin fair allocation can be efficiently computed when agents have additive valuations for \emph{chores} (i.e., negatively valued goods). This improves upon the previously best-known $2$-approximation of Aziz et al.~\cite{aziz-chores}; see Appendix~\ref{proof:chores}.\footnote{In the case of chores the approximation factor is greater than or equal to one.}

Our algorithm for fair division of chores is very similar to the above-mentioned algorithm for (positively valued) goods. As in the goods case, the algorithm for chores has two parts. The first part uses the reduction of Bouveret and Lema\^{i}tre~\cite{bouveret-basic} to obtain ordered instances wherein agents value chores in the same order.  The second part is analogous to the goods setting as well. In particular, it employs an envy-graph algorithm with the only difference that chores are allocated to sink nodes in the envy graph. 

Since the valuations are additive in the chores context as well, the analysis follows suit. Specifically, with respect to bundles with three or more chores, we show that envy is bounded up to a relatively small (in terms of absolute value) chore. In addition, we have to carefully analyze allocations in which bundles have at most two chores. In comparison to the goods case, this  analysis leads to a better bound for chores. Hence, we obtain a stronger approximation guarantee for negatively valued goods.

\subsection{Related Work} 
As mentioned previously, fair division is a well-studied problem in multiple disciplines~\cite{brams1996fair, moulin2004fair, moulin2016handbook}. In this paper we focus on allocating indivisible goods and, in particular, study the concept of maximin shares. This concept was defined in~\cite{budish-approx-ceei} as a natural relaxation of \emph{fair-share guarantees}~\cite{moulin2014cooperative}. In particular, Budish \cite{budish-approx-ceei} applied this solution concept to the problem of allocating courses to students, and showed that maximin shares for $n$ agents can be guaranteed if some goods are over allocated  and the share of each agent is computed with respect to $n+1$ agents, instead of $n$. Note that these allocation errors cannot be ignored when, say, there is only one copy of each good and, hence, such a bi-criteria approximation guarantee does not provide any nontrivial multiplicative approximation bounds.

The work of  Bouveret and Lema\^{i}tre~\cite{bouveret-basic} focuses on fair division of indivisible goods with additive valuations. Along with other results, they show that if an allocation is fair with respect to other solution concepts (such as envy-freeness and proportionality), then it is maximin fair as well. They also show that maximin fair allocations are guaranteed to exist for binary, additive valuations. The empirical results presented in~\cite{bouveret-basic} suggest that maximin shares invariably exist when the additive valuations are drawn from particular distributions. 

Procaccia and Wang \cite{EC-2/3} also study maximin shares under additive valuations. They prove that maximin fair allocations do not always exist, but allocations wherein each agent gets a bundle of value greater than $2/3$ times her maximin share is guaranteed to exist. The proof of existence in~\cite{EC-2/3} is constructive, but it leads to a polynomial time algorithm only when the number of agents is constant. Amanatidis et al.~\cite{ICALP-2/3} extend this result by showing that a $2/3$-approximate maximin fair allocation can be computed in polynomial time. Kurokawa et al.~ \cite{kurokawa2016can} strengthen the negative result of Procaccia and Wang~\cite{EC-2/3}: they present smaller fair division instances which do not admit maximin fair allocations. In addition, they also show that when the valuations are randomly drawn then maximin fair allocations exist with high probability. 

Aziz et al. \cite{aziz-chores} consider maximin fair allocations for chores (i.e., negatively valued goods). They show that, even in the case of chores, exact maximin fair allocations do not always exist and they complement this result by developing an efficient algorithm which finds a $2$-approximate maximin fair allocation.

Maximin fair division is seemingly related to the well-studied Santa Claus problem~\cite{bezakova2005allocating, bansal2006santa, feige2008allocations, annamalai2014combinatorial, asadpour2010approximation, chakrabarty2009allocating}, where the goal is to find an allocation that maximizes the minimum value over all agents.\footnote{The Santa Claus problem is also called the max-min fair allocation problem.} However, note that the Santa Claus problem is an {optimization} problem---the objective function is the egalitarian social welfare---whereas, in the maximin fair division problem the goal is to find an allocation that satisfies a {property} (that every agent gets her maximin share). The two problems behave differently with respect to scaling. In particular, a maximin fair allocation continues to be fair if a single agent scales her valuations, this is not the case with the Santa Claus problem. These problems differ significantly in terms of approximability as well: In the additive valuation case, the best known algorithm for the Santa Claus problem achieves a $\widetilde{O}(n^{\varepsilon})$-approximation and runs in time $O(n^\frac{1}{\varepsilon})$~\cite{chakrabarty2009allocating}; here $n$ is the number of agents and $\varepsilon >0$.  For the submodular valuation case, a $O(n^{1/4} m^{1/2})$-approximation algorithm was developed for the Santa Claus problem by Goemans et al.~\cite{goemans2009approximating}; here $m$ is the number of goods. On the other hand, as we show in this paper, constant-factor approximation guarantees can be achieved for the maximin fair division problem, even when the valuations are submodular.  

In notable cases, fair division algorithms have been implemented: (i) Course Match~\cite{budish2016course} is used for course allocation at Wharton School at the University of Pennsylvania, (ii) the website Spliddit~\cite{spliddit} provides free access to fair division methods, and (iii) the Adjusted Winner Website\footnote{\url{http://www.nyu.edu/projects/adjustedwinner/}} implements an algorithm of Brams and Taylor \cite{brams1996fair} to fairly divide goods between two players. These practical applications enforce the idea that simple/easily implementable fair division algorithms---like the ones developed in this paper---have potential for impact.

\subsection{Subsequent Work}
In a recent result Ghodsi et al.~\cite{ghodsi2017fair} have obtained improved approximation guarantees for maximin fair division under additive and submodular valuations. They also consider more general valuation classes. Specifically, Ghodsi et al.~\cite{ghodsi2017fair} obtain a $3/4$-approximation for additive valuations, a $1/3$-approximation for submodular valuations, a constant-factor approximation for \emph{XOS valuations}, and a logarithmic approximation for \emph{subadditive valuations}. For the additive case, the algorithm of Ghodsi et al.~\cite{ghodsi2017fair} requires intricate preprocessing steps: their algorithm first clusters the agents (based on their valuation functions) and then applies ``bag-filling'' procedures to each cluster. Furthermore, variants of envy-freeness are devised in~\cite{ghodsi2017fair} to obtain the stated approximation guarantee for additive valuations. In~\cite{ghodsi2017fair}, the results for more general valuation classes are obtained by combining the valuations of the agents into a single function which achieves a high value for an allocation iff the allocation is approximately maximin fair. Approximation algorithms (based on local-search methods) for maximizing this unified function essentially lead to the stated results for submodular, XOS, and subadditive valuations. 

An extended version of~\cite{EC-2/3}---specifically,~\cite{kurokawa2018fair}---also provides a $2/3$ approximation guarantee for additive valuations; the bound is slightly better when the total number of agents is even. The approach in~\cite{kurokawa2018fair} is similar to the one developed in the present paper, in the sense that both the results employ the envy-graph algorithm of~\cite{lipton-envy-graph}. However, Kurokawa et al.~\cite{kurokawa2018fair} do not use the reduction of~\cite{bouveret-basic} and, instead, rely on a series of preprocessing steps which, in particular, bound the two highest valued goods in each bundle. Note that a similar structure appears in this work as well.

\section{Notation and Preliminaries} \label{Notations}
We let $[n]:=\{1,2,\ldots,n\}$ denote the set of agents and $[m]:=\{1,2,\ldots, m\}$ denote the set of indivisible goods. The valuation of an agent $i$ for a subset of goods  $S \subseteq [m]$ is denote by $v_{i}(S)$. For ease of presentation, we will use $v_i(g)$---instead of $v_i(\{ g \})$---to denote the valuation of agent $i$ for good $g \in [m]$.

Agents are said to have \emph{additive} valuations when, for any subset of goods $S \subseteq [m]$, the valuation of agent $i$ satisfies $v_i(S) := \sum_{g \in S} v_{i}(g) $. For ease of presentation, in the case of additive valuations, we will also use $v_{i,g}$ to denote  $v_{i}(g)$. The current work primarily addresses \emph{nonnegative} valuations, $v_{i,g}\geq 0$ for all  $i\in[n]$ and $ g\in[m]$. Results for negatively valued goods (i.e., chores) are confined to Appendix~\ref{proof:chores}. 

Agents are said to have nonnegative, \emph{monotone}, \emph{submodular} valuations (with $v_i(\emptyset)=0$) when for any $A \subseteq B \subset [m]$ and $g \in [m]\setminus B$, the valuation of agent $i$ satisfies:
\begin{align*}
v_i(\emptyset) & = 0   \\
v_i(A)  & \leq v_i(B) \qquad \text{(Monotonicity)} \\ 
v_i(B\cup\{g\}) - v_i(B)  & \leq v_i(A\cup\{g\}) - v_i(A)  \qquad \text{(Submodularity)}
\end{align*} 

Given a set function $f: 2^{[m]} \mapsto \mathbb{R}$ and a subset $H \subseteq [m]$, we will consider \emph{marginal} values with respect to $H$ and, in particular, define the marginal function $f_H$ as follows: $f_H(S):= f(H\cup S)- f(H)$ for all $ S\subseteq [m]$. Note that, for any $H \subseteq [m]$, if $f$ is submodular then so is $f_H$.

Write $\Pi_n(S)$ to denote the set of all \emph{$n$-partitions} of set $S \subseteq [m]$. With a slight abuse of notation, we will write $\Pi_n(m)$ to denote $\Pi_n([m])$. Throughout, we will use the word \emph{allocation} to denote an $n$-partition $(A_1, \ldots, A_n)$ where set $A_i$ is assigned to agent $i$. In addition, an allocation of a strict subset of the goods will be referred to as a \emph{partial allocation}. We will use the term \emph{bundles} to denote subsets, $A_i$s, in an (partial) allocation.

Our fairness guarantee is in terms of \emph{maximin shares}. Formally, 

\begin{definition}[Maximin Share]
	\label{defn:maxmin}
	For an agent $i \in [n]$ and a subset of goods $S \subseteq [m]$, the $n$-maximin share is defined to be 
	\begin{align*}
	\mu_i^n(S) & := \max_{ (M_1, M_2, \ldots, M_n) \in \Pi_n(S)} \  \min_{k \in [n]} v_i(M_k).
	\end{align*}
\end{definition}

For ease of presentation, we will use $\mu_i$ to denote $\mu_i^n([m])$ and use the term maximin share to refer to the $n$-maximin share, whenever $n$ is clear from the context. Ideally, we would like to ensure fairness by partitioning the goods such that each agent gets her maximin share, i.e., partition the goods into subsets  $(A_1, A_2, \ldots, A_n) \in \Pi_n(m)$ such that  $v_i(A_i) \geq \mu_i$ for all $ i \in [n]$.

Since such partitions do not always exist (see Appendix~\ref{example:mms:existance} for an example), a natural goal is to study approximation guarantees. In particular, our objective is to develop efficient algorithms that determine a partition $(A_1, \ldots, A_n) \in \Pi_n(m)$ wherein each agent, $i$, gets a bundle, $A_i$, of value (under $v_i$) at least $\alpha$ times her maximin share, with $\alpha \in (0,1]$ being as large as possible. We call such partitions as \emph{$\alpha$-approximate maximin fair} allocations. When $\alpha=1$, we say that the allocation is maximin fair.

In this work we develop simple and efficient algorithms that achieve this objective with $\alpha = 2/3$ for additive valuations and $\alpha = 0.21$ for submodular valuations.

Our analysis of the algorithm (for additive valuations) that achieves this result relies on understanding the fairness of an allocation in terms of \emph{envy}. Formally, for an allocation $\B= (B_1, \ldots, B_n)$, we say that agent $i$ \emph{envies} agent $j$ iff $i$ prefers $j$'s bundle to her own, i.e., $v_i(B_i) < v_i (B_j)$. An allocation is defined to be \emph{envy-free} iff no agent envies any other. Since envy-free allocations are not guaranteed to exist with indivisible goods,\footnote{Consider a setting with two players and one indivisible good that has a nonzero value for both the players.} a natural relaxation (which has been studied in literature; see, e.g.,~\cite{budish-approx-ceei}) is \emph{envy-free up to one good} (EF1): an allocation $\A$ is said to be EF1 iff for each $i, j \in [n]$ there exists a good $g \in A_j$ such that $ v_i(A_i) \geq v_i(A_j \setminus \{ g \})$. In fact, we will show that---for a relevant class of problem instances---our algorithm finds an allocation which is \emph{envy-free up to the least valued good} (EFX) (defined in \cite{unreasonalble-nsw}).
\begin{definition}[Envy-free up to the least valued good (EFX)]
\label{defn:efx}
	An allocation $\A$ is said to be envy-free up to the least valued good (EFX)  iff for every $i, j \in [n]$ and each good $g \in A_j$ we have $ v_i(A_i) \geq v_i(A_j \setminus \{g\})$.
\end{definition}

\section{Additive Valuations}
\label{sect:additive}

In this section we present an algorithm that efficiently finds a $\frac{2}{3}$-approximate maximin fair allocation under additive valuations. As mentioned previously, the existence and efficient computation of $\frac{2}{3}$-approximate maximin fair allocations (under additive valuations) has been addressed in prior work~\cite{EC-2/3, ICALP-2/3}. The key contribution of this section is to show that such an approximation guarantee can, in fact, be achieved via a simple, combinatorial algorithm. 

\begin{theorem}[Main Result for Additive Valuations] \label{thm:main}
Given a set of $n$ agents with additive valuations, $\{v_i\}_{i \in [n]}$, for a set of $m$ indivisible goods, we can find a partition $(A_1, \ldots, A_n) \in \Pi_n(m)$ in polynomial time that satisfies 
\begin{align*}
v_i(A_i) \geq \frac{2n}{3n-1} \ \mu_i \qquad \text{ for all } i \in [n].
\end{align*}
Here $\mu_i$ is the maximin share of agent $i$.
\end{theorem}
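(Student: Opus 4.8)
The plan is to reduce the problem to \emph{ordered instances} and then analyze a single run of an envy-cycle based allocation procedure, extracting the maximin guarantee from an EFX property.

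First, using the reduction of Bouveret and Lemaitre, I would argue that it suffices to handle ordered instances, i.e.\ instances in which the goods admit a common order $1,2,\dots,m$ with $v_i(1)\ge v_i(2)\ge\dots\ge v_i(m)$ for every agent $i$. This reduction is valid because $\mu_i$ depends only on the multiset of values $\{v_i(j)\}_j$, so relabelling each agent's values into a common decreasing order preserves every $\mu_i$; moreover, any allocation of the ordered instance can be transferred back to the original instance without any agent losing value, so an $\alpha$-approximate maximin fair allocation for the ordered instance yields one for the original instance. On this ordered instance I would run the envy-cycle elimination algorithm of Lipton et al., feeding the goods in decreasing order $1,2,\dots,m$ and, at each step, (a) repeatedly resolving directed cycles in the current envy graph by rotating bundles along each cycle, and (b) handing the next good to an agent that nobody envies (a source of the resulting acyclic envy graph).

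The key invariant I would prove is that this procedure maintains EFX throughout. The point is that the newly introduced good $g$ is the least valuable good allocated so far, \emph{for every agent}, by virtue of the common order. Hence if $g$ is added to the bundle $A_s$ of an unenvied agent $s$, then for any agent $i$ we have $g=\argmin_{h\in A_s\cup\{g\}} v_i(h)$, so removing the least valued good of $A_s\cup\{g\}$ returns $A_s$, and $v_i(A_i)\ge v_i(A_s)=v_i((A_s\cup\{g\})\setminus\{\text{least good}\})$, which is exactly the EFX constraint towards $s$; all other pairwise constraints are untouched. A cycle rotation only replaces an agent's bundle by one it strictly prefers, which can only slacken every EFX constraint, so EFX survives both operations; termination of the cycle-resolution phase and existence of a source follow from the standard potential/DAG argument.

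The crux is then to show that an EFX allocation of such an instance is $\frac{2n}{3n-1}$-approximate maximin fair. Fix agent $i$, write $a=v_i(A_i)$, and for each $j$ let $g_j$ be the least valued good of $A_j$. Applying the EFX constraint at $g_j$ gives $v_i(A_j)\le a+v_i(g_j)$ for every $j\ne i$; summing over $j$ and using the elementary bound $v_i([m])\ge n\,\mu_i$ (the maximin partition gives $n$ bundles each of value $\ge\mu_i$) yields $n\,\mu_i\le n\,a+\sum_{j\ne i} v_i(g_j)$. To control $\sum_{j\ne i} v_i(g_j)$ I would first peel off ``large'' goods by induction on $n$: if some good $j$ is worth at least $\frac{2n}{3n-1}\mu_i$ to some agent $i$, allocate it to $i$ (which satisfies $i$) and recurse on the remaining $n-1$ agents, using the monotonicity $\mu^{n-1}_{i'}([m]\setminus\{j\})\ge \mu^n_{i'}([m])$ together with $\frac{2(n-1)}{3n-4}\ge\frac{2n}{3n-1}$. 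On the residual instance every good is ``small'', and I would combine this with the per-bundle EFX inequalities $v_i(A_j)-v_i(g_j)\le a$ (which charge the non-minimum goods of each bundle against $a$) and the feasibility of the maximin partition to conclude $\sum_{j\ne i} v_i(g_j)\le \frac{n(n-1)}{3n-1}\mu_i$, hence $a\ge\frac{2n}{3n-1}\mu_i$.

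The main obstacle is precisely this last charging step. Treating each minimum good $v_i(g_j)$ independently, using only the aggregate inequality $v_i([m])\ge n\mu_i$ and a per-good size bound, yields merely the weaker ratio $\tfrac{n+1}{3n-1}$ (which tends to $\tfrac13$, not $\tfrac23$). Closing the gap forces me to exploit the \emph{feasibility} of the maximin partition---that $v_i([m])$ can actually be split into $n$ bundles each of value $\ge\mu_i$---\emph{simultaneously} with the per-bundle EFX constraints, rather than just their sum. Squeezing the exact constant $\frac{2n}{3n-1}$ out of this interaction, and making the large-good reduction compatible with a genuinely polynomial-time implementation that never computes the $\mu_i$ exactly, is where I expect the real difficulty to lie.
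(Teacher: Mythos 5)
Your setup is exactly the paper's: the Bouveret--Lema\^{i}tre reduction to ordered instances, the envy-cycle algorithm fed goods in decreasing order with each good going to a source, and the observation that this preserves EFX (Lemma~\ref{lemma:efx}). But the quantitative core of your argument is missing, and you say so yourself. The inequality you need, $\sum_{j\neq i} v_i(g_j)\leq \frac{n(n-1)}{3n-1}\mu_i$, does not follow from EFX plus peeling: after removing all goods of value at least $\frac{2n}{3n-1}\mu_i$, the minimum goods $g_j$ can each still be worth nearly $\frac{2n}{3n-1}\mu_i$, so summing the EFX inequalities against $v_i([m])\geq n\mu_i$ caps out at exactly the $\frac{n+1}{3n-1}$ ratio you compute---your claimed bound demands that the minimum goods average roughly \emph{half} the peeling threshold, and nothing in your sketch forces that. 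More fundamentally, your framing ``an EFX allocation of an ordered instance is $\frac{2n}{3n-1}$-approximate maximin fair'' is a property of arbitrary EFX allocations, which is both unproven and more than the paper attempts: the paper's analysis leans on the \emph{run history} of the algorithm, not merely the EFX property of its output. (Indeed, EFX alone is known to yield only a $4/7$ fraction of the maximin share in general, so some use of the algorithmic structure is unavoidable.)

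What actually closes the gap in the paper is a structural analysis of the prefix of high-valued goods. One defines $H=\{g_1,\dots,g_{\tau-1}\}$ with the threshold $\frac{1}{2}v_1(A_1)$ set relative to the agent's \emph{final bundle value}---which, incidentally, resolves your worry about never computing $\mu_i$, since no peeling against $\mu_i$ occurs. The source rule plus decreasing order force every bundle of the partial allocation $\mathcal{P}$ on $H$ to have size at most two (so $\tau\leq 2n$), and in fact force $\mathcal{P}$ to have the exact LPT-pairing form of Claim~\ref{claim:H-ord}: singletons $\{g_1\},\dots,\{g_{n-h}\}$ together with pairs $\{g_{n-h+k}, g_{n+h+1-k}\}$. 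The majorization lemma (Claim~\ref{claim:maj}, via the exchange argument of Proposition~\ref{prop:maj}) then shows \emph{every} partition of $H$---in particular the maximin partition intersected with $H$---majorizes $\mathcal{P}$, which lets one compare the $\ell$ bundles of $\mathcal{A}$ that lie entirely inside $H$ against the $\ell$ smallest maximin bundles directly, rather than summing per-good charges. EFX is used only once, in Claim~\ref{claim:propo}, to bound bundles containing a low-valued good by $\frac{3}{2}v_1(A_1)$; combining the two bounds over the partition yields $v_1(A_1)\geq \frac{2(n-\ell)}{3(n-\ell)-1}\cdot\mu_1 \geq \frac{2n}{3n-1}\mu_1$. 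In short: right algorithm, right invariant, but the charging step you flag as the ``real difficulty'' is genuinely where the proof lives, and it is resolved by majorization of the high-value prefix allocation, not by any per-minimum-good accounting.
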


The proof of the theorem proceeds in two parts: (i) First, we show that the problem of finding an approximate maximin fair allocation can be reduced to a restricted setting where the agents value the goods in the same order. That is, the $m$ goods can be ordered (indexed), say $g_1, g_2, \ldots, g_m$, such that the valuation of every agent $i$ respects this ordering: for each  $a < b$ we have $ v_{i} (g_a) \geq v_{i} (g_b)$. (ii) Then, we develop a $2/3$-approximation algorithm for this restricted setting. 


\subsection{The Reduction of Bouveret and Lema\^{i}tre} \label{section:bouveret}
Intuitively, the greater the conflict of interest among agents, the harder it is to obtain a fair allocation. For additive valuations, Bouveret and Lema\^{i}tre \cite{bouveret-basic} show that it is hardest to guarantee the agents their maximin share when all the agents have the same order of preference over the goods. Formally, we say an instance is an \emph{ordered instance} iff there exists a total ordering ($\prec$) over the set of goods $[m]$ such that for all  $i\in[n]$ and $ g, g' \in [m]$, such that  $g\prec g'$, we have  $v_{i,g}\geq v_{i,g'}$. Therefore, without loss of generality, we can say that an instance is ordered iff for all agents $i\in[n]$ and goods $ g, g'\in[m]$, such that $g< g'$, we have $v_{i,g} \geq v_{i,g'}$, i.e., $v_{i,1}\geq v_{i,2}\geq \dots \geq v_{i,m}$.

Given a fair division instance $I=\left([n],[m],\{v_i\}_{i \in [n]} \right)$ we will construct an ordered instance as follows.  First, note that for every agent $i \in [n]$, there exists a permutation $\sigma_i: [m] \rightarrow [m]$ such that for all  $g, g' \in [m]$ with $g < g'$ we have  $v_{i,\sigma_i(g)} \geq  v_{i,\sigma_i(g')}$. Using these permutations, we define a new valuation function, $v'_i$, for every agent $i$ by setting $v'_{i,g} = v_{i,\sigma_i(g)}$ for all $ g \in [m]$. That is, for agent $i$, the value of the $g$th good in the new instance is equal to the $g$th largest valuation of $i$ in the original instance. We refer to $I'=\left([n],[m],\{v'_i\}_{i \in [n]} \right)$ as the ordered instance of $I$. Note that we can find the ordered instance of $I$ in $O(n m \log m)$ time.

Bouveret and Lema\^{i}tre~\cite{bouveret-basic} show that if there is an allocation $\mathcal{A}'$ which guarantees every agent her maximin share in $I'$, then there is an allocation $\mathcal{A}$ that guarantees every agent her maximin share in $I$. Moreover, given a maximin fair allocation $\mathcal{A}'$ (for $I'$), a maximin fair allocation $\mathcal{A}$ for the original instance $I$ can be found in polynomial time. In fact, we directly use their proof to establish the following slightly stronger statement that such a reduction is possible with respect to any set of scalars, $\alpha_i$s, and not just with respect to the maximin shares.  

{
	\begin{algorithm}
		{{\bf Input :} Instance $I= \left( [n],[m], \{v_i \}_{i \in [n]} \right)$ with an allocation $\mathcal{A}'=\left( A'_1, \ldots, A'_n\right)$ for the ordered instance $I'=\left( [n],[m], \{v'_i\}_{i \in [n]} \right)$ such that $v'_i(A'_i) \geq \alpha_i$ for all $ i \in [n]$.  \\ {\bf Output:} An allocation $\mathcal{A}=(A_1,\ldots ,A_n)$ such that, $v_i(A_i) \geq \alpha_i $ for all $ i \in [n]$.}
		
		\caption{Algorithm of Bouveret and Lema\^{i}tre $\ALGBL$} 
		\label{alg:bouveret}
		\begin{algorithmic}[1]
			\STATE For all $i \in [n]$ and $g \in A_i'$ set $p_g :=i $. 
			\COMMENT{This defines a sequence of agents, $P := p_1, p_2, \dots, p_m$.}
			\STATE Initialize allocation $\A=(A_1, A_2, \ldots, A_n)$ with $A_i = \emptyset$, for all $ i \in [n]$, and initialize $R \leftarrow [m]$.
			\FOR{$g =1$ to $m$}
			\STATE Pick $k \in  \argmax_{g' \in R} \left\{ v_{p_g} (g') \right\}$. 
			\STATE Update $A_{p_g}\leftarrow A_{p_g}\cup\{k\}$ and $R\leftarrow R\setminus \{k\}$.
			\ENDFOR 
			\STATE Return $\A$.
		\end{algorithmic}
	\end{algorithm}
}

\begin{theorem}
\label{thm:Bouvere1t}
Given a maximin fair division instance $I= \left( [n],[m], \{v_i \}_{i \in [n]} \right)$ and scalars $\{ \alpha_i \in \mathbb{R}\}_{i=1}^n$, let $I'=\left([n],[m],\{v'_i\}_{i \in [n]} \right)$ be the ordered instance of $I$. If there exists an allocation $\mathcal{A}'= (A'_1,A'_2,\ldots,A'_n)$ that satisfies $v'_i(A'_i) \geq \alpha_i$ for all $ i \in [n]$, then there exists an allocation $\mathcal{A}= (A_1,\ldots ,A_n)$ in which $v_i(A_i) \geq \alpha_i$, for all  $ i \in [n]$. Furthermore, given $I$ and $\mathcal{A}'$, $\ALGBL$ computes the allocation $\mathcal{A}$ in polynomial time.
\end{theorem}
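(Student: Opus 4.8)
The plan is to prove the stronger algorithmic claim directly: I will show that the allocation $\mathcal{A}$ returned by $\ALGBL$ already satisfies $v_i(A_i) \geq \alpha_i$ for every agent $i$, which simultaneously yields existence and the polynomial-time guarantee. The whole argument reduces to a single per-agent accounting, in which I fix an agent $i$ and compare the value she accrues from the greedy picks the algorithm assigns to her against her value $v'_i(A'_i)$ in the ordered instance, which by hypothesis is at least $\alpha_i$.

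Concretely, recall that $v'_{i,j} = v_{i,\sigma_i(j)}$ is the $j$-th largest value agent $i$ assigns to any good in $I$, so that $v'_{i,1} \geq v'_{i,2} \geq \cdots \geq v'_{i,m}$. Fix $i$ and let $A'_i = \{ j_1 < j_2 < \cdots < j_t \}$ be the positions allocated to $i$ in $\mathcal{A}'$; by the definition of the sequence $P$ in Step~1, the agent acting in iteration $j_\ell$ of the for-loop is exactly $p_{j_\ell} = i$. Write $g_\ell$ for the good she picks in that iteration. The key step is to show that $v_i(g_\ell) \geq v'_{i,j_\ell}$ for every $\ell$. To see this, observe that at the start of iteration $j_\ell$ exactly $j_\ell - 1$ goods have been removed from $R$, since each of the previous iterations removes precisely one good. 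The $j_\ell$ goods that $i$ values most in $I$ carry the values $v'_{i,1}, \ldots, v'_{i,j_\ell}$; as only $j_\ell - 1$ goods have been removed in total, a pigeonhole argument guarantees that at least one of these top $j_\ell$ goods still lies in $R$. Since the greedy rule in Step~4 selects a maximizer of $v_i$ over $R$, the good $g_\ell$ has value at least that of the surviving top good, hence at least $v'_{i,j_\ell}$.

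Summing this inequality over the $t$ iterations in which $i$ picks, and invoking additivity of $v_i$, I obtain
\begin{align*}
v_i(A_i) \ = \ \sum_{\ell=1}^{t} v_i(g_\ell) \ \geq \ \sum_{\ell=1}^{t} v'_{i,j_\ell} \ = \ v'_i(A'_i) \ \geq \ \alpha_i,
\end{align*}
which is exactly the desired guarantee; since $i$ was arbitrary, this establishes the existence of $\mathcal{A}$. The polynomial-time claim is then immediate: forming $P$ takes $O(m)$ time, and the loop runs $m$ iterations, each computing an $\argmax$ over a set $R$ of size at most $m$, for $O(m^2)$ total work (on top of the $O(nm\log m)$ already spent to build $I'$).

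The step I expect to require the most care is the pigeonhole count in the middle paragraph: I must pin down that exactly $j_\ell - 1$ goods are removed before iteration $j_\ell$ (one per earlier iteration, regardless of which agent acted), and that the relevant comparison is against the $j_\ell$-th largest value of $i$ rather than the $\ell$-th. Everything after that claim is routine summation resting only on additivity, and notably no property of the scalars $\alpha_i$ is used—this is precisely why Bouveret and Lema\^{i}tre's reduction carries over verbatim to arbitrary scalars and not merely to the maximin shares.
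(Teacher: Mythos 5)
Your proposal is correct and follows essentially the same route as the paper's proof: the paper likewise observes that exactly $j-1$ goods are allocated before iteration $j$, so the good agent $i$ greedily picks in an iteration $j \in A'_i$ has value at least $v'_{i,j}$, and then sums over $i$'s iterations using additivity to get $v_i(A_i) \geq v'_i(A'_i) \geq \alpha_i$. Your rendering of the pigeonhole step (comparing against the $j_\ell$-th largest value rather than the $\ell$-th) is in fact slightly more careful than the paper's phrasing, but it is the same argument.
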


\begin{proof}
Clearly, $\ALGBL$ runs in polynomial time. Now, we will show that $\ALGBL$ computes the required allocation $\A$. Let $k_g$ denote the good allocated in the $g${th} iteration of the second for-loop (Steps 3 to 6) in $\ALGBL$. Now consider agent $i$, suppose $g \in A'_i$ then $k_g \in A_i$. Note that, for any $g \neq g'$, we have $k_g \neq k_{g'}$; since, a good is removed from the set $R$ after it is allocated. Before the $g$th iteration of the second for-loop in $\ALGBL$, exactly $g-1$ goods had been allocated. Therefore $k_g$ is among the top $g$ goods for agent $i$. Hence, for all $g \in A'_i$, $v_i(k_g) \geq v'_i(g)$. This implies that, $\sum_{g \in A'_i} v_i(k_g) \geq \sum_{g \in A'_i} v'_i(g)$, i.e., $v_i(A_i) \geq v'_i(A'_i)$. Hence the condition that $v'_i(A'_i) \geq \alpha_i$ gives us $v_i(A_i) \geq \alpha_i$ for all $i$. This completes the proof.
\end{proof}

Note that the maximin share depends on the values of the goods but not on the order, hence the maximin share of an agent $i$ in an instance $I$ is equal to her maximin share in the ordered instance $I'$.  Therefore, instantiating Theorem~\ref{thm:Bouvere1t}, we have the following corollary.

\begin{corollary}
\label{thm:Bouveret}
Given a maximin fair division instance $I= \left( [n],[m], \{v_i \}_{i \in [n]} \right)$ and $\alpha \in \mathbb{R}$ write $I'$ to denote the ordered instance of $I$ and $\mu_i$ (respectively $\mu_i'$) to denote the maximin share of agent $i$ in $I$ (respectively $I'$). If there exists an  allocation $\mathcal{A}'=(A'_1,A'_2,\ldots,A'_n)$ that satisfies $v'_i(A'_i)\geq \alpha\mu'_i $, for all $ i\in [n]$, then there exists is an allocation $\mathcal{A}=(A_1,A_2, \ldots ,A_n)$ in which  $v_i(A_i)\geq\alpha\mu_i$, for all $ i \in [n]$. Furthermore, given $I$ and $\mathcal{A}'$, the allocation $\mathcal{A}$ can be computed in polynomial time.
\end{corollary}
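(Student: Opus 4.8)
The plan is to derive this corollary directly from Theorem~\ref{thm:Bouvere1t} by choosing the scalars $\alpha_i$ appropriately; the only genuine content beyond a mechanical substitution is the observation (already flagged in the text preceding the statement) that passing to the ordered instance leaves each agent's maximin share unchanged.

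First I would establish that $\mu_i = \mu_i'$ for every agent $i$. The key point is that the maximin share $\mu_i^n([m])$ depends only on agent $i$'s own valuation function, and under additive valuations the value $v_i(S) = \sum_{j \in S} v_{i,j}$ of any bundle $S$ is simply a sum over its elements. The ordered instance replaces agent $i$'s values by $v'_{i,j} = v_{i,\sigma_i(j)}$, i.e.\ it merely relabels the goods via the permutation $\sigma_i$. Thus for any $n$-partition $(M_1,\ldots,M_n) \in \Pi_n(m)$, applying $\sigma_i$ coordinatewise yields an $n$-partition $(\sigma_i(M_1),\ldots,\sigma_i(M_n))$ with $v'_i(\sigma_i(M_k)) = v_i(M_k)$ for every $k$, and conversely. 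Hence the two instances realize exactly the same collection of bundle-value profiles, so the $\max$-$\min$ in Definition~\ref{defn:maxmin} attains the same value in $I$ and $I'$, giving $\mu_i = \mu_i'$.

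With this in hand, I would instantiate Theorem~\ref{thm:Bouvere1t} with the scalars $\alpha_i := \alpha\,\mu_i'$. The hypothesis $v'_i(A'_i) \geq \alpha \mu_i'$ is precisely the condition $v'_i(A'_i) \geq \alpha_i$ required by that theorem, so it produces an allocation $\mathcal{A} = (A_1,\ldots,A_n)$ with $v_i(A_i) \geq \alpha_i = \alpha \mu_i' = \alpha \mu_i$ for all $i$, where the final equality is the first step. The polynomial-time computability, and the fact that $\mathcal{A}$ is produced by $\ALGBL$ from the inputs $I$ and $\mathcal{A}'$, are inherited verbatim from Theorem~\ref{thm:Bouvere1t}.

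The main (and rather mild) obstacle is the permutation-invariance argument for $\mu_i = \mu_i'$; once it is recorded, the corollary is an immediate specialization of the theorem. I would simply be careful to note that this invariance relies on additivity---so that a bundle's value is determined by the multiset of per-good values it contains---together with the fact that $\sigma_i$ is a bijection of $[m]$, which is exactly what guarantees that $(\sigma_i(M_1),\ldots,\sigma_i(M_n))$ is again a legitimate $n$-partition of $[m]$.
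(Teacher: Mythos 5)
Your proposal is correct and follows the paper's proof exactly: the paper likewise observes that the maximin share depends only on the multiset of values (hence $\mu_i = \mu_i'$, stated there without the explicit permutation argument you supply) and then instantiates Theorem~\ref{thm:Bouvere1t} with $\alpha_i := \alpha\mu_i'$, inheriting the polynomial-time computation via $\ALGBL$. One cosmetic slip: since $v'_{i,j} = v_{i,\sigma_i(j)}$, the partition of $I$ should be transported to $I'$ via $\sigma_i^{-1}$, i.e.\ $v'_i\bigl(\sigma_i^{-1}(M_k)\bigr) = v_i(M_k)$ rather than $v'_i\bigl(\sigma_i(M_k)\bigr) = v_i(M_k)$, which is harmless because $\sigma_i$ is a bijection and your argument is symmetric in the two directions.
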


\subsection{Envy Graph Algorithm}
As shown in Corollary~\ref{thm:Bouveret}, we only need address the setting in which the agents value the goods in the same order. Hence, in the remainder of this section, we solely focus on ordered instances. That is, we will establish the desired approximation for instances wherein the goods are indexed, say $g_1, g_2, \ldots, g_m$, such that for every agent $i$, we have $ v_{i} (g_a) \geq v_{i} (g_b)$, for all indices $a < b$. The approximation algorithm given in this section iteratively allocates the goods in decreasing order of value (i.e., in increasing order of their index) and maintains a partial allocation, $\A=(A_1, \ldots, A_n)$, of the goods assigned so far. In order to assign a good, the algorithm selects a bundle, $A_i$, by considering a directed graph, $G(\A)$, that captures the envy between agents. The nodes in this envy graph represent the agents and it contains a directed edge from $i$ to $j$ iff $i$ envies $j$, i.e., $v_i(A_i) < v_i(A_j)$. 


The following lemma was established in~\cite{lipton-envy-graph} and it shows that we can always ``resolve'' a partial allocation and obtain an acyclic \emph{envy graph}. The proof of the lemma is direct, we provide it here for completeness. 

\begin{lemma}[\cite{lipton-envy-graph}]
\label{lemma:envy}
Given a partial allocation $\A=(A_1, \ldots, A_n)$ of a subset of goods $S \subseteq [m]$, we can find another partial allocation $\B=(B_1, \ldots, B_n)$ of $S$ in polynomial time such that 
\begin{enumerate}
\item[(i)] The valuations of agents for their bundles do not decrease: $v_i(B_i) \geq v_i(A_i)$ for all $i \in [n]$.
\item[(ii)] The envy graph $G(\B)$ is acyclic.
\end{enumerate}
\end{lemma}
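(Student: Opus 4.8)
The plan is to exploit the cycle structure of the envy graph $G(\A)$ directly. The key observation is that if there is a directed cycle $i_1 \to i_2 \to \cdots \to i_\ell \to i_1$ in $G(\A)$, then along this cycle every agent strictly prefers the bundle of her successor to her own. So I would \emph{rotate} the bundles backward along the cycle: reassign to $i_1$ the bundle $A_{i_2}$, to $i_2$ the bundle $A_{i_3}$, and so on, finally giving $i_\ell$ the bundle $A_{i_1}$; all agents outside the cycle keep their current bundles. Call the resulting allocation $\A'$. Since $v_{i_t}(A_{i_{t+1}}) > v_{i_t}(A_{i_t})$ for each $t$ on the cycle (with indices cyclic), every agent on the cycle is strictly better off, and every agent off the cycle is unchanged; hence property (i), $v_i(A'_i) \geq v_i(A_i)$, is preserved, with strict improvement for the cycle members.

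Next I would set up the termination argument. Define the potential $\Phi(\A) := \sum_{i \in [n]} v_i(A_i)$, the total valuation agents assign to their own bundles. A cycle rotation strictly increases $\Phi$, since at least one (in fact every) agent on the cycle strictly gains and no one loses. To guarantee that this process halts in polynomially many steps, I would note that $S$ is fixed and there are only finitely many ways to partition $S$ into $n$ labeled bundles; because $\Phi$ strictly increases at each rotation and can never decrease, no partition can recur, so the procedure terminates. A cleaner polynomial bound comes from observing that each rotation strictly decreases the number of edges in a way that cannot cycle back — but the safest route for the writeup is to track $\Phi$ and argue that the number of distinct partial allocations reachable is bounded, giving termination; detecting a cycle and performing one rotation is clearly polynomial-time per step.

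The algorithm is then: repeatedly find a directed cycle in $G(\A)$ (by any standard cycle-detection method, e.g. depth-first search) and rotate along it, updating $\A$, until $G(\A)$ is acyclic. When the loop halts, property (ii) holds by construction, and property (i) holds because it is maintained at every rotation. The output $\B$ is the final allocation.

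The main obstacle I anticipate is the termination/polynomiality bound rather than the correctness of the rotation step, which is essentially immediate. A naive appeal to ``finitely many partitions'' gives termination but not a polynomial bound; to get polynomial time I would instead argue that each rotation strictly increases the integer-valued (after appropriate normalization) or at least monotone potential $\Phi$, and more carefully that one can bound the number of rotations by $O(\text{poly}(n,m))$ — for instance by charging each rotation to a strict decrease in the number of envy edges, or by noting each rotation reduces the number of agents involved in cycles without creating new cycles among previously satisfied agents. Pinning down that the edge count (or an analogous combinatorial quantity) decreases monotonically, so that the total number of rotations is polynomially bounded, is the step that requires the most care.
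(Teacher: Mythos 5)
Your rotation step is exactly the paper's: find a cycle $i_1 \to i_2 \to \cdots \to i_\ell \to i_1$ in $G(\A)$, give each cycle agent the bundle of her successor, leave everyone else untouched, and observe that property (i) is preserved (strictly improved on the cycle). That part is correct and complete. The gap is in the polynomial-time bound, which is part of the lemma statement and which you explicitly leave open. Your ``safest route'' --- the potential $\Phi(\A) = \sum_i v_i(A_i)$ together with ``finitely many labeled partitions of $S$'' --- only yields termination, not a polynomial bound: the valuations are arbitrary nonnegative reals, so a strict increase in $\Phi$ carries no quantitative progress, and the number of labeled $n$-partitions of $S$ is exponential in $|S|$. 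As written, your argument proves the lemma with ``finite time'' in place of ``polynomial time.''

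The fix is the edge-counting claim you gesture at but do not prove, and it is precisely what the paper's proof supplies. After a rotation, the $\ell \geq 2$ cycle edges vanish. For edges into the cycle: if some agent $k$ envies a cycle agent $i_a$ in the new allocation, then since $i_a$ now holds the old bundle $A_{i_{a+1}}$ and $k$'s own bundle has not lost value, $k$ must already have envied $i_{a+1}$ before the rotation --- so each new envy edge is matched by a distinct old one, and no net edges are created this way. Edges between two agents outside the cycle are unchanged, and outgoing edges of cycle agents can only disappear, since their valuations for their own bundles strictly increased. Hence the total number of envy edges strictly decreases with every rotation, giving at most $n(n-1)$ rotations, each implementable in polynomial time via standard cycle detection. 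Your worry that a rotation might ``cycle back'' is thus resolved not by the potential $\Phi$ but by this injective mapping of new envy edges to old ones; adding that paragraph makes your proof match the paper's.
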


\begin{proof}
	If the envy graph of $\A$ is acyclic then the claim holds trivially. Otherwise, write $C=i_1 \rightarrow i_2 \rightarrow \ldots \rightarrow i_\ell \rightarrow i_1$ to denote a cycle in $G(\A)$. Now, we can reallocate the bundles as follows: for all agents $k$ not in $C$, i.e., $k \notin \{i_1, i_2, \ldots, i_\ell \}$ set $A'_k = A_k$, and for all the agents in the cycle set $A'_i$ to be the bundle of their successor in $C$, i.e., set $A'_{i_a} = A_{i_{(a+1)}}$ for $ 1\leq a < \ell$ along with $A'_{i_\ell} = A_{i_1}$.
	
	Note that after this reallocation we have $v_i(A'_i) \geq v_i(A_i)$ for all $i \in [n]$. Furthermore, the number of edges in $G(\A)$ strictly decreases: the edges in $C$ do not appear in the envy graph of $(A'_1, \ldots, A'_n)$ and if an agent $k$ starts envying an agent in the cycle, say agent $i_a$, then $k$ must have been envious of $i_{a+1}$ in $\A$. Edges between agents $k$ and $k'$ which are not in $C$ remain unchanged, and edges going out of an agent $i$ in the cycle $C$ can only get removed, since agent $i$'s valuation for the bundle assigned to her increases. Therefore, we can repeatedly remove cycles and keep reducing the number of edges in the envy graph to eventually a find a partial allocation $\B$ that satisfies the stated claim.
\end{proof}

\begin{remark}
Lemma~\ref{lemma:envy} can also be established by applying the \emph{top-trading cycle} (TTC) algorithm~\cite{shapley1974cores} among the $n$ agents and the bundles of the given partial allocation $\A$. Since the TTC algorithm (after potentially reassigning the bundles) is guaranteed to return an allocation, say $\mathcal{C}$, in which no group of agents (coalition) can improve their valuations by mutually exchanging bundles, we get that the the envy graph $G(\mathcal{C})$ is acyclic. In addition, part (i) of Lemma~\ref{lemma:envy} follows from the fact that the reallocations (trades) along the TTC will never decrease the valuations of the involved agents.
\end{remark}

Our algorithm ($\ALG$) uses Lemma~\ref{lemma:envy} and is detailed below. Lemma~\ref{lemma:efx} states that $\ALG$ returns an EFX allocation and is proved below.

{
	\begin{algorithm}
		{{\bf Input :} An ordered instance (i.e., an instance wherein the agents value the goods in the same order) with $n$ agents, $m$ indivisible goods, and valuations $v_{i,g}$ for each $i \in [n]$ and $g \in [m]$. \\ {\bf Output:}  An approximately maximin fair allocation.}
		\caption{Envy Graph Algorithm $\ALG$}
		\label{alg:envygraph}
		
		\begin{algorithmic}[1]
			\STATE Order the goods, $g_1, \ldots, g_m$, such that for every agent $i \in [n]$ and $a<b$ we have $v_{i g_a} \geq v_{i g_b}$.
			\STATE Initialize allocation $\A=(A_1, A_2, \ldots, A_n)$ with $A_i = \emptyset$ for all $ i \in [n]$.
			\FOR{$j =1$ to $m$}
			\STATE \label{step:source} Pick a vertex $i$ that has no incoming edges in the envy graph $G(\A)$, i.e., $i$ is a source vertex in $G(\A)$. \\
			\COMMENT{The algorithm maintains the invariant that $G(\A)$ is acyclic. Hence, such a vertex is guaranteed to exist.}
			\STATE Update $A_i \leftarrow A_i \cup \{ g_j \}$.
			\IF{the current envy graph $G(\A)$ contains a cycle} 
			\STATE \label{step:acyclic} Use Lemma~\ref{lemma:envy} to update $\A$ and, hence, obtain an acyclic envy graph. 
			\ENDIF
			\ENDFOR 
			\STATE Return $\A$.
		\end{algorithmic}
	\end{algorithm}
}
 
\begin{lemma}
\label{lemma:efx}
For any ordered maximin fair division instance, the partial allocation found by $\ALG$ is envy-free up to the least valued good (EFX).  
\end{lemma}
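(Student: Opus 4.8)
The plan is to prove a slightly stronger statement by induction on the number of goods allocated: after each iteration of the for-loop, the current partial allocation $\A=(A_1,\ldots,A_n)$ is \emph{both} EFX and acyclic (acyclicity is already an invariant maintained by $\ALG$). The base case is the empty allocation, which is trivially EFX and acyclic. For the inductive step, assume the allocation $\A$ obtained after $j-1$ goods is EFX. In iteration $j$ the algorithm picks a source vertex $i$ of $G(\A)$ and sets $A_i' := A_i\cup\{g_j\}$, leaving the other bundles unchanged; call this intermediate allocation $\A'$. I would first argue that $\A'$ is EFX, and then argue that resolving any resulting cycle via Lemma~\ref{lemma:envy} preserves EFX.

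To show $\A'$ is EFX, note that the only pairs whose envy relation can change are those in which the envied bundle is $A_i'$. Pairs $(k,q)$ with $k,q\neq i$ are unchanged and so are covered by the inductive hypothesis, and for pairs $(i,q)$ with $q\neq i$ the envier's bundle only grew, so $v_i(A_i')\ge v_i(A_i)\ge v_i(A_q\setminus\{g\})$ by monotonicity/additivity and the hypothesis. So fix an agent $k\neq i$ and a good $g\in A_i'$; I must show $v_k(A_k)\ge v_k(A_i'\setminus\{g\})$. The crucial fact is that $i$ is a source, hence no agent envies $i$ under $\A$, i.e.\ $v_k(A_k)\ge v_k(A_i)$ for every $k$. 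If $g=g_j$, then $A_i'\setminus\{g\}=A_i$ and the claim is exactly this source property. If instead $g\in A_i$, then $g=g_a$ for some $a<j$ (it was allocated in an earlier iteration), so by additivity $v_k(A_i'\setminus\{g\})=v_k(A_i)-v_k(g_a)+v_k(g_j)$; since the instance is ordered and $a<j$ we have $v_k(g_a)\ge v_k(g_j)$, whence $v_k(A_i'\setminus\{g\})\le v_k(A_i)\le v_k(A_k)$. This establishes EFX for $\A'$. The key point being exploited here is that $g_j$ is the least-valued good allocated so far \emph{simultaneously for every agent}, which is precisely what the ordered-instance reduction (Corollary~\ref{thm:Bouveret}) provides.

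It remains to show that the cycle-resolution step preserves EFX, which I expect to be the main subtlety since reshuffling bundles could a priori create new strong envy. The plan is to use that Lemma~\ref{lemma:envy} only \emph{permutes} the bundles among the agents: if $\B=(B_1,\ldots,B_n)$ is the acyclic allocation it returns, there is a permutation $\pi$ with $B_q=A'_{\pi(q)}$ for all $q$, and part (i) of the lemma gives $v_p(B_p)\ge v_p(A'_p)$ for every $p$. Then for any pair $(p,q)$ and good $g\in B_q=A'_{\pi(q)}$, applying EFX of $\A'$ to the pair $(p,\pi(q))$ yields $v_p(A'_p)\ge v_p(A'_{\pi(q)}\setminus\{g\})$, and combining with the no-decrease property gives $v_p(B_p)\ge v_p(A'_p)\ge v_p(A'_{\pi(q)}\setminus\{g\})=v_p(B_q\setminus\{g\})$. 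Hence $\B$ is EFX and acyclic, which closes the inductive step. Since the algorithm carries out this step for $j=1,\ldots,m$, the returned allocation is EFX, as claimed.
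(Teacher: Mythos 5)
Your proof is correct and follows essentially the same inductive route as the paper's: new envy toward the source created by $g_j$ is controlled using that $g_j$ is, for \emph{every} agent, the least-valued good allocated so far (the ordered-instance property), and the cycle-resolution step preserves EFX because Lemma~\ref{lemma:envy} merely permutes the bundles while no agent's value for her own bundle decreases. You only spell out two points the paper leaves implicit---the swap computation $v_k(A_i'\setminus\{g_a\}) = v_k(A_i) - v_k(g_a) + v_k(g_j) \leq v_k(A_i)$ showing EFX survives removal of an \emph{arbitrary} good (not just $g_j$), and the fact that the reallocation in Lemma~\ref{lemma:envy} is a permutation of the existing bundles.
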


\begin{proof}
We will establish this lemma by induction. Note that the partial allocation obtained by $\ALG$ after allocating the first good $g_1$ is trivially EFX. This gives us the base case of the induction. From the induction hypothesis, we get that the partial allocation obtained by $\ALG$ at the end of the ${(j-1)}$th iteration is EFX. We will now show that the partial allocation computed by $\ALG$ after allocating the good $g_j$ is EFX as well. 

Write $\A=(A_1, A_2, \ldots, A_n)$ to denote the partial allocation of $\ALG$ at the beginning of the $j$th iteration and let $i$ be the source vertex selected in the $j$th iteration (see Step~\ref{step:source}). $\ALG$ adds good $g_j$ to the bundle of agent $i$ and the bundles of all other agents remain unchanged. Let $\A'=(A'_1, A'_2, \ldots, A'_n)$ be the partial allocation obtained after the assignment of $g_j$ and before the application of Lemma~\ref{lemma:envy} (in Step~\ref{step:acyclic}). Here, $A'_a = A_a$ for all $a \neq i$ and $A'_i = A_i \cup \{g_j\}$.

Note that any new envy that is created by assigning $g_j$ (to $i$) must be directed towards $i$: for any two agents $a, b \neq i$, if we had $v_a(A_a) \geq v_a(A_b)$ before the allocation of $g_j$, then this inequality continues to hold after the allocation (since $\ALG$ does not alter bundles $A_a$ and $A_b$), i.e., we have $v_a(A'_a) \geq v_a(A_b')$. 

Since $i$ is a source vertex in $G(\A)$, with respect to allocation $\A$ agent $i$ was not envied by any other agent. Furthermore, for every agent, $g_j$ is the least valued good among the goods considered so far (i.e., among $\{g_1, g_2, \ldots, g_{j-1}, g_j\}$).\footnote{Recall that the given instance is ordered.} These observation imply that $\A'$ is an EFX allocation as well. In particular, if for an agent $a$ we have $v_a(A'_a) < v_a(A'_i)$, then using the fact that $v_a(A'_a) = v_a(A_a) \geq v_a(A_i) = v_a(A'_i \setminus \{g_j\})$ we get that the envy between $a$ and $i$ can be resolved by removing the least valued good $g_j$.  For other agents $a, b \neq i$, the induction hypothesis ensures that (analogous to $\A$), the envy in allocation $\A'$ is up to the least valued good (recall that $A'_a = A_a$ and $A'_b = A_b$).  
 
Finally, $\ALG$ applies Lemma~\ref{lemma:envy} to $\A'$ to obtain an acyclic envy graph. We know that in this process the valuations of the agents for their bundles do not decrease. Indeed, the update performed in Lemma~\ref{lemma:envy} simply reallocates the bundles and, hence, the collection of bundles remains unchanged. Therefore, an application Lemma~\ref{lemma:envy} maintains the EFX property.

Overall, by induction, we get that all the partial allocations computed by $\ALG$ are EFX and the claim follows. 
\end{proof}

\subsection{Majorization}
\label{section:majorization}

Next we introduce the concept of \emph{majorization} and prove a technical proposition which will be useful in establishing the desired approximation guarantee. Majorization corresponds to a partial order on multisets/vectors of real number and this notion provides a framework for establishing inequalities in many settings; see~\cite{olkin2016inequalities} for an excellent exposition. Under this partial order, a multiset $X$ is related to (i.e., majorizes) another multiset $Y$ iff the prefix sums of the sorted version of $X$ are at least as large as the prefix sums of the sorted version of $Y$. Formally,  

\begin{definition}[Majorization]
A multiset $X = \{ x_i \in \mathbb{R} \mid 1 \leq i \leq n \}$ is said to majorizes another multiset $Y=\{ y_i \in \mathbb{R} \mid 1 \leq i \leq n  \}$ iff 
\begin{align*}
\sum_{i=1}^k x_{(i)} & \geq \sum_{i=1}^k y_{(i)} \qquad \text{ for all } 1 \leq k < n \quad \textrm{and} \\
\sum_{i=1}^n x_{(i)} & = \sum_{i=1}^n y_{(i)}.
\end{align*}
Here $x_{(i)}$ ($y_{(i)}$) is the $i$th largest element of $X$ ($Y$).\footnote{Ties for the $i$th largest position are broken arbitrarily.} 
\end{definition}


A proof of Proposition~\ref{prop:maj} can be found in~\cite{lpt-wu}. For completeness, we provide one here as well. 
\begin{proposition}
\label{prop:maj}
Let $v, v' \in \mathbb{R}$ be two elements of a multiset $A$ of real numbers. In addition, say $u,u' \in \mathbb{R}$ satisfy $u+u' = v + v'$ and $|u - u'| < |v - v'| $. Then $A$ majorizes $ \left( A \setminus \{v,v'\}\right) \cup \{u, u'\}$.
\end{proposition}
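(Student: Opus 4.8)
The plan is to verify the two defining conditions of majorization directly against the prefix-sum definition given just above the statement. First I would reduce to a canonical configuration: since neither the statement nor the definition of majorization depends on the labelling within the pair $\{v,v'\}$ or within $\{u,u'\}$, I would assume without loss of generality that $v \geq v'$ and $u \geq u'$. Then from $u + u' = v + v'$ together with $|u-u'| < |v-v'|$ (which becomes $u - u' < v - v'$, since both differences are now nonnegative) I would derive the sandwiched ordering $v \geq u \geq u' \geq v'$; concretely, averaging and differencing the two relations yields $u < v$ and $u' > v'$. Writing $B := A \setminus \{v,v'\}$ for the common part, the two multisets are $A = B \cup \{v,v'\}$ and $A' := \left(A \setminus \{v,v'\}\right) \cup \{u,u'\}$, and they have equal total sum because $u + u' = v + v'$; this settles the equality condition $\sum_{i=1}^n x_{(i)} = \sum_{i=1}^n y_{(i)}$.

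For the inequalities $\sum_{i=1}^k x_{(i)} \geq \sum_{i=1}^k y_{(i)}$, $1 \leq k < n$, I would use the elementary identity that the sum of the $k$ largest elements of a multiset equals the maximum, over all $k$-element sub-multisets $T$, of the sum of the elements of $T$. It therefore suffices to show that for every $k$ the best $k$-subset of $A$ has sum at least that of the best $k$-subset of $A'$. Fix a $k$-subset $T' \subseteq A'$ achieving the maximum for $A'$, and build a $k$-subset $T \subseteq A$ of sum at least that of $T'$ by trading the new elements for the old ones: if $T'$ avoids both $u$ and $u'$ then $T' \subseteq B$ and $T := T'$ works with equal sum; if $T'$ contains both $u$ and $u'$, then $T := \left(T' \setminus \{u,u'\}\right) \cup \{v,v'\}$ has the same sum since $v + v' = u + u'$; and if $T'$ contains $u$ only, then $T := \left(T' \setminus \{u\}\right) \cup \{v\}$ has at least as large a sum because $v \geq u$.

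The one case needing care — and the step I expect to be the main obstacle — is when the optimal $T'$ contains the smaller new element $u'$ but not $u$; here I cannot simply swap $u'$ for $v'$, since $v' \leq u'$ would decrease the sum. The resolution is that, because $u \geq u'$ and $T'$ is a maximizer, replacing $u'$ by $u$ produces another $k$-subset of $A'$ whose sum does not decrease, hence another maximizer that contains $u$ but not $u'$; this reduces the problematic case to one of the cases already handled. Running this argument for every $k$ gives $\sum_{i=1}^k x_{(i)} \geq \sum_{i=1}^k y_{(i)}$ for all $1 \leq k < n$, and combined with the equality of total sums this is precisely the assertion that $A$ majorizes $\left(A \setminus \{v,v'\}\right) \cup \{u,u'\}$.
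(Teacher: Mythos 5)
Your proof is correct, but it takes a genuinely different route from the paper's. Both arguments begin the same way, reducing to the sandwiched ordering $v \geq u \geq u' \geq v'$, but from there the paper sorts $A$ as $a_1 \geq \dots \geq a_n$, tracks the sorted positions of $v$, $v'$ and the insertion points of $u$, $u'$, and verifies $f_t(A') \leq f_t(A)$ (where $f_t$ is the top-$t$ prefix sum) by an explicit four-case computation over the index ranges of $t$; the delicate part there is the index bookkeeping at the case boundaries. You instead invoke the variational characterization of the prefix sum --- the sum of the $k$ largest elements of a multiset is the maximum sum over its $k$-element sub-multisets --- and run an exchange argument converting any maximizing $k$-subset $T'$ of $\left(A \setminus \{v,v'\}\right) \cup \{u,u'\}$ into a $k$-subset of $A$ of no smaller sum. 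Your handling of the one nontrivial case (when $T'$ contains $u'$ but not $u$) is exactly right: since $u \geq u'$, upgrading $u'$ to $u$ yields another maximizer, reducing to a case already settled; the implicit multiset bookkeeping (treating the distinguished copies of $u$, $u'$, $v$, $v'$ as labeled) is harmless. What each approach buys: the paper's computation produces explicit formulas for $f_t(A') - f_t(A)$ in each range, while your argument eliminates the sorted-index case analysis entirely and is the standard, more portable proof that a Robin Hood transfer (replacing a pair by a less spread pair of the same sum) can only be majorized by the original --- it would generalize verbatim to simultaneous transfers on several pairs.
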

\begin{proof}
	Define $f_t(S)$ to be the sum of the largest $t$ elements in $S$. Consider the set $A=\{a_1,a_2,\dots,a_n\}$ such that $a_1\geq a_2\geq \dots \geq a_n$. Consider $ A'=\left( A \setminus \{v,v'\}\right) \cup \{u, u'\}$ such that $u+u' = v + v'$ and $|u - u'| < |v - v'| $. Without loss of generality, let $v=a_{i_1} \geq v'=a_{i_2}$ and let $u \geq u'$. Since $u+u' = v + v'$ and $|u - u'| < |v - v'| $, we have $v \geq u \geq u' \geq v'$. Therefore, the non-increasing sequence of elements in $A'$ is
	$$a_1 \geq a_2 \geq \dots \geq a_{i_1-1} \geq a_{i_1+1} \geq \dots \geq u \geq a_{i_3} \geq \dots \geq u' \geq a_{i_4} \geq \dots \geq a_{i_2-1} \geq a_{i_2 +1} \geq \dots \geq a_n  $$
	
	We now do case-wise analysis:
	\begin{enumerate}
		\item For $t<i_1$ and $t\geq  i_2$: $f_t(A)=f_t(A')$.
		\item For $i_1 \leq t < i_3 - 1$: $f_t(A')=f_t(A)-a_{i_1}+a_{t+1} \leq f_t(A)$.
		\item For $i_3-1 \leq t < i_4$: $f_t(A')=f_t(A)-a_{i_1}+u \leq f_t(A)$.
		\item For $i_4 \leq t < i_2$: $f_t(A')=f_t(A)-a_{i_1}-a_t+u+u'=f_t(A)+a_{i_2}-a_t \leq f_t(A)$. 
	\end{enumerate}
	
	Therefore, $f_t(A')\leq f_t(A)$ for any $1\leq t \leq n$, i.e., $A$ majorizes $ \left( A \setminus \{v,v'\}\right) \cup \{u, u'\}$.
\end{proof}

\subsection{Proof of Theorem~\ref{thm:main}}
To establish the stated claim, in the remainder of the proof we will show that $v_1(A_1) \geq \frac{2n}{3n-1} \mu_1$; an analogous argument establishes the desired bound, $v_i(A_i) \geq \frac{2n}{3n-1} \mu_i$, for all agents $i \in [n]$.

Write $\A=(A_1, \ldots, A_n)$ to denote the allocation returned by the algorithm $\ALG$. Consider the set of goods that have value more than $\frac{1}{2} v_1(A_1)$; specifically, define $\tau := \argmin \{ j \mid v_1(g_j) \leq \frac{1}{2} v_1(A_1)\}$ and let $H$ denote the set of high valued goods $H:= \{g_1, g_2, \ldots, g_{\tau-1} \}$. In addition, write $\Pa=(P_1, P_2, \ldots, P_n)$ to denote the partial allocation that $\ALG$ computes for $H$.\footnote{Without loss of generality, we can assume that $G(\Pa)$ is acyclic, since an application of Lemma~\ref{lemma:envy} in $\ALG$ simply reassigns the bundles between agents and the constituent bundles in $\Pa$ do not change.}

Also, let $t$ denote the smallest iteration count at which $\ALG$ assigns a good (in particular, good $g_{t}$) to a bundle of size two. Hence, every bundle in the partial allocation of the first $t-1$ goods is of size at most two.

Note that $\ALG$ keeps updating the partial allocations by adding goods to existing bundles.\footnote{Though, the bundle assigned to an agent might change during an application of Lemma~\ref{lemma:envy}.} This observation implies that the cardinalities of the bundles in successive partial allocations keep on increasing. This monotonic growth of the bundles also ensures that for each set $A_i$ (in the final allocation $\A$), there exists a unique set $P_{j}$ (in the partial allocation $\Pa$) such that $P_{j} \subseteq A_i$. Moreover, Lemma~\ref{lemma:envy} guarantees that as $\ALG$ progresses the valuation of the agents also does not decrease; in particular, $v_1(A_1) \geq v_1(P_1)$.

Let $\Q=(Q_1, Q_2, \ldots, Q_n)$ be the partial allocation considered by $\ALG$ at the beginning of the $t$th iteration. The definition of $t$ ensures that $|Q_i| \leq 2$ for all $i \in [n]$. Below we will show that $\tau \leq t$ and, therefore, using the observation that the cardinalities of the bundles are nondecreasing, we get the following bound: $|P_i| \leq 2$ for all $i \in [n]$. This, in turn, leads to the inequality $\tau \leq 2n$. 

Recall that good $g_t$ was the first good that was assigned to a set of cardinality two. Write $Q_b=\{ g_x, g_y \}$ to denote this set. Since $g_t$ is included in $Q_b$, $b$ must have been a source of the acyclic envy graph $G(\Q)$. In other words, agent $1$ does not envy $b$'s current bundle, $v_1(Q_1) \geq v_1(Q_b) = v_1(g_x) + v_1(g_y)$.
Moreover, since $\ALG$ assigns goods in decreasing order of value,  $g_t$'s value is no more than that of $g_x$ and $g_y$: $v_1(g_x) \geq v_1 (g_t)$ and $v_1(g_y) \geq v_1(g_t)$. Therefore, $v_1(Q_1) \geq 2 v_1 (g_t)$. Since the valuation of agent $1$ does not decrease during the execution of $\ALG$, we get $v_1(A_1) \geq 2 v_1 (g_t)$. By definition, $g_\tau$ is the smallest indexed good that satisfies this inequality and, hence, the inequality $\tau \leq t$ holds.



We will first address the bundles that contain a good with index greater than $\tau-1$. In particular, 
we establish the following claim.
\begin{claim}
\label{claim:propo}
In the partial allocation retuned by $\ALG$, $\A=(A_1, \ldots, A_n)$, the following inequality holds for all bundles $A_i$ that satisfy $A_i \cap \{g_\tau, g_{\tau+1}, \ldots, g_m \} \neq \emptyset$
\begin{align}
\label{ineq:propo}
v_1(A_1) \geq \frac{2}{3} v_1(A_i)
\end{align}
\end{claim}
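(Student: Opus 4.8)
The plan is to derive the inequality directly from the EFX property of the allocation $\A$ returned by $\ALG$ (Lemma~\ref{lemma:efx}), combined with the bound on the value of the low-indexed goods. First I would record the consequence of the definition of $\tau$: since the goods are indexed in decreasing order of value and $g_\tau$ is the first good whose value for agent $1$ is at most $\frac{1}{2} v_1(A_1)$, every good $g_j$ with $j \geq \tau$ satisfies $v_1(g_j) \leq \frac{1}{2} v_1(A_1)$.

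Now fix a bundle $A_i$ with $A_i \cap \{g_\tau, g_{\tau+1}, \ldots, g_m\} \neq \emptyset$ and let $g$ be any good in this intersection, so that $v_1(g) \leq \frac{1}{2} v_1(A_1)$. Applying the EFX guarantee of Lemma~\ref{lemma:efx} to agent $1$ against the bundle $A_i$ with this particular good $g$ gives $v_1(A_1) \geq v_1(A_i \setminus \{g\})$. By additivity of $v_1$, the right-hand side equals $v_1(A_i) - v_1(g)$, so $v_1(A_i) \leq v_1(A_1) + v_1(g) \leq v_1(A_1) + \frac{1}{2} v_1(A_1) = \frac{3}{2} v_1(A_1)$. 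Rearranging yields $v_1(A_1) \geq \frac{2}{3} v_1(A_i)$, which is precisely \eqref{ineq:propo}.

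I do not expect a genuine obstacle here; the entire content is the observation that a bundle touching $\{g_\tau, \ldots, g_m\}$ necessarily contains a good of value at most $\frac{1}{2} v_1(A_1)$, which is exactly the slack that the EFX property can absorb. The only points requiring care are that EFX (rather than merely EF1) is what makes the argument go through---EF1 would only guarantee \emph{some} removable good, which could be a high-valued good of index $< \tau$ and hence yield a far weaker bound, whereas EFX lets me choose a low-valued good to remove---and that additivity is needed to peel off $v_1(g)$ from $v_1(A_i \setminus \{g\})$. Both properties are available for the ordered additive instances under consideration.
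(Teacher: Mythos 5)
Your proof is correct and takes essentially the same route as the paper's: both apply the EFX guarantee of Lemma~\ref{lemma:efx} to a good $g \in A_i \cap \{g_\tau, \ldots, g_m\}$, use additivity to write $v_1(A_i \setminus \{g\}) = v_1(A_i) - v_1(g)$, and bound $v_1(g) \leq \frac{1}{2} v_1(A_1)$ via the definition of $\tau$ together with the ordering of the goods. Your remark that EFX (rather than mere EF1) is the essential ingredient is exactly the point the paper relies on as well.
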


\begin{proof}
Consider a bundle $A_i$ which contains a good $g_a$ with index $a \geq \tau$. Lemma~\ref{lemma:efx} implies that $v_1(A_1) \geq v_1(A_i) - v_1(g_a)$. Also, note that $v_1(A_1) \geq 2 v_1(g_\tau) \geq 2 v_1(g_a)$; here, the second inequality follows from the fact that the values of the goods are ordered. Therefore, the claim holds.
\end{proof} 

We now need to argue about bundles in $\A$ that do not contain any good with index greater than $\tau-1$. For each such bundle, $A_i$, we have a unique set $P_j$ in the partial allocation $\Pa$ such that $A_i = P_j$. This follows from the fact that after the $(\tau-1)$th iteration no good is allocated to any of these bundles. Therefore, to get a handle on bundles in $\A$ that do not contain any good with index greater than $\tau-1$ we will consider the partial allocation $\Pa$. By definition, $\Pa$ is a partial allocation of $H =\{g_1, \ldots, g_{\tau-1} \}$, i.e., $\cup_{i=1}^n P_i = H$. As mentioned above the cardinality of $H$ (i.e., $\tau -1$) is at most $2n$. Write $h := \max\{ 0 , \tau - 1 - n \}$. We will consider the case in which $\tau \geq n+1$---and, hence $\tau - 1  = n + h$---the other case wherein $\tau < n+1$ is simpler and follows analogously.

\begin{claim}
\label{claim:H-ord}
The partial allocation $\Pa$ consists of (up to reordering) the following bundles: 
\begin{align}
\{g_1\}, \{g_2\}, \ldots, \{g_{n-h} \}, \{ g_{n-h+1}, g_{n+h} \}, \{g_{n-h+2}, g_{n+h -1} \}, \ldots, \{g_{n-1}, g_{n+2} \}, \{g_n, g_{n+1} \}.
\end{align}
\end{claim}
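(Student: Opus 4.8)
The plan is to prove the claim by induction on the iteration count of $\ALG$, tracking the \emph{multiset} of bundles produced after each good is allocated. Since the claim only asserts the structure of $\Pa$ up to reordering, the cycle-removal step (Lemma~\ref{lemma:envy}) is harmless: it merely permutes bundles among agents and leaves the multiset of bundles unchanged, so I may reason purely about which bundle (by content) the next good joins, ignoring the relabeling of agents. Throughout I would freely invoke the already-established fact that $|P_i| \le 2$ for every $i$, equivalently that no bundle reaches size three during the first $\tau-1 = n+h$ iterations.

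First I would treat iterations $1$ through $n$ as the base case. At the start of iteration $j \le n$ the partial allocation consists of $j-1$ singletons and $n-j+1$ empty bundles; an empty bundle is never envied and is therefore always a source, whereas every singleton $\{g_c\}$ is envied by the holder of an empty bundle. Hence $g_j$ is placed in an empty bundle, and after iteration $n$ the multiset of bundles is exactly $\{g_1\}, \{g_2\}, \ldots, \{g_n\}$. One checks that the envy graph stays acyclic during this phase (edges among singletons run from lower to higher value, and no edge enters an empty bundle), so no reassignment occurs; goods of value zero are harmless and can be absorbed into the ``up to reordering'' clause.

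The inductive step covers iterations $n+1, \ldots, n+h$ and is the heart of the argument. Suppose that at the start of iteration $n+k$ (with $1 \le k \le h$) the bundles are, up to reordering, the singletons $\{g_1\}, \ldots, \{g_{n-k+1}\}$ together with the pairs $\{g_{n-k+2}, g_{n+k-1}\}, \ldots, \{g_n, g_{n+1}\}$. I would show that $g_{n+k}$ joins the lowest-valued singleton $\{g_{n-k+1}\}$, producing the pair $\{g_{n-k+1}, g_{n+k}\}$ and advancing the hypothesis to $k+1$. The good is placed in a source; it cannot join a pair, since that would create a bundle of size three, contradicting $|P_i| \le 2$. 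Among the singletons, every $\{g_c\}$ with $c < n-k+1$ is strictly envied by the holder of $\{g_{n-k+1}\}$ (the common ordering gives $v(g_{n-k+1}) \le v(g_c)$, strictly when $g_c$ has larger value), so such a singleton is not a source. Consequently the only singleton that can be a source is $\{g_{n-k+1}\}$, and since the good must land on a singleton, it lands on $\{g_{n-k+1}\}$.

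The subtle point I expect to be the main obstacle is justifying that $\{g_{n-k+1}\}$ really is a source, given that $\ALG$ selects an \emph{arbitrary} source and the envy relations between a pair and a singleton depend on the actual, agent-specific values rather than on the index order alone. I would resolve this by contradiction using the size bound: if $\{g_{n-k+1}\}$ were not a source, then by the previous paragraph no singleton would be a source, so every source would be a pair, and $\ALG$ would be forced to add $g_{n+k}$ to a pair, again violating the established $|P_i| \le 2$. Hence $\{g_{n-k+1}\}$ must be a source and receives $g_{n+k}$. Ties in value (when $v(g_{n-k+1}) = v(g_c)$ for some $c$) are immaterial, since they only permit reordering within a common value level, which the claim already allows. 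Iterating from $k=1$ to $k=h$ yields the pairs $\{g_n,g_{n+1}\}, \{g_{n-1},g_{n+2}\}, \ldots, \{g_{n-h+1}, g_{n+h}\}$ and leaves the singletons $\{g_1\}, \ldots, \{g_{n-h}\}$ untouched, which is precisely the stated structure; the omitted case $\tau < n+1$ follows by the same reasoning restricted to the base phase.
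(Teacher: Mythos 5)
Your proof is correct and follows essentially the same route as the paper's: allocate $g_1,\ldots,g_n$ to distinct (empty) bundles, then argue inductively that each $g_{n+k}$ must land on the lowest-valued remaining singleton, using the ordering of values plus the previously established bound that no bundle in the first $\tau-1$ iterations exceeds size two, with cycle-resolution dismissed because it only permutes bundles among agents. Your explicit contradiction argument for why the chosen source must be the singleton $\{g_{n-k+1}\}$ (rather than a pair) just spells out what the paper leaves implicit, and your treatment of ties matches the paper's ``distinct valuations'' WLOG footnote.
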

\begin{proof}
	In the first $n$ iterations $\ALG$ allocates the goods $\{g_1, g_2, \ldots, g_n\}$ to $n$ distinct bundles. At this point of time---since the values of the goods are ordered---the agent with bundle $\{g_n\}$ envies all the other agents.\footnote{Without loss of generality, we can assume that the valuations are distinct.} Hence, good $g_{n+1}$ is allocated to the bundle $\{g_n\}$. By definition, till the $t$th iteration (with $t \geq \tau -1 = n+h$) no bundle gets more than two items. Therefore, it must be the case that all the remaining $h$ goods in $H$ are allocated to bundles of size one. Now, we can argue inductively that when allocating good $g_{n+k}$ (for $1 < k \leq h$), $\ALG$ has the following singleton bundles: $\{g_1\}, \{g_2\}, \ldots, \{g_{n-k}\}, \{g_{n-k+1}\}$ and, among the agents that have these bundles, only the agent with bundle $\{g_{n-k+1}\}$ can be a source in the envy graph.\footnote{Recall that the values of the goods are ordered.} Therefore, good $g_{n+k}$ gets assigned to bundle $\{g_{n-k+1}\}$. Overall, we get that $\Pa$  consists of the bundles mentioned above. 
\end{proof}


\begin{figure}
\begin{center}
	\includegraphics[scale=1.1]{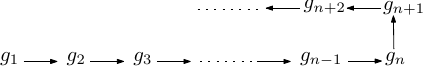}
	\caption{Sequence of allocation of the first $\tau$ goods.}
\end{center}
\end{figure}



In the remainder of the proof we say that a partial allocation $\R=(R_1, \ldots, R_n)$ majorizes another partial allocation $\R'=(R'_1, \ldots, R'_n)$ iff $\{ v_1(R_i) \}_i$ majorizes $\{ v_1(R'_i)\}_i$. Specifically, we have the following claim for $\Pa$. The proof of Claim~\ref{claim:maj} is similar to the proof of Lemma 4 in~\cite{lpt-wu}.
\begin{claim}
	\label{claim:maj}
	Every partial allocation of $H$ (i.e., every $\R \in \Pi_n(H)$) majorizes  $\Pa$.
\end{claim}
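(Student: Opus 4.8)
The plan is to show that $\Pa$ is the \emph{most balanced} partition of $H$, i.e. the minimum in the majorization order among all $n$-partitions of $H$. Write $a_j := v_1(g_j)$ for $j \in [\tau-1]$, so that $a_1 \geq a_2 \geq \cdots \geq a_{\tau-1}$, and recall from Claim~\ref{claim:H-ord} that $\Pa$ consists of the singletons $\{g_1\}, \ldots, \{g_{n-h}\}$ together with the $h$ pairs $\{g_{n-h+\ell}, g_{n+h+1-\ell}\}$ for $1 \leq \ell \leq h$. First I would observe that every $\R = (R_1, \ldots, R_n) \in \Pi_n(H)$ has the same total value $\sum_i v_1(R_i) = \sum_{j=1}^{\tau-1} a_j$, so the equality condition in the definition of majorization holds automatically; it remains to verify the prefix-sum inequalities, equivalently to prove that $\Pa$ simultaneously minimizes $f_k(\R)$, the sum of the $k$ largest values among $v_1(R_1), \ldots, v_1(R_n)$, for every $1 \le k < n$.

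The engine of the argument is Proposition~\ref{prop:maj}, read as a statement about bundle values. If two bundles of $\R$ have values $v > v'$ and we reassign goods of total value $t$ from the first to the second with $0 < t < v - v'$ (for instance, moving a single good $g$ with $v_1(g) < v - v'$, or swapping two goods so that the net transferred value has size $< v - v'$), then the new bundle values $\{v-t, v'+t\}$ have the same sum and a strictly smaller gap, so by Proposition~\ref{prop:maj} the resulting partition is majorized by $\R$. Call such a reassignment a \emph{balancing move}. Since majorization is transitive, it suffices to prove that, starting from an arbitrary $\R$ and repeatedly applying balancing moves, we reach a partition whose bundle-value multiset coincides with that of $\Pa$; then $\R$ majorizes $\Pa$, as desired.

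To make this precise I would note that each balancing move strictly decreases the Schur-convex potential $\sum_i v_1(R_i)^2$ (same sum, smaller gap), and that there are only finitely many partitions, so any maximal sequence of balancing moves terminates at a \emph{stuck} partition admitting no balancing move. The two facts to establish are then: (i) $\Pa$ is stuck, which holds because its singletons are the $n-h$ largest goods and, within and across the pairs, the ordering makes every candidate transfer or swap fail the strict condition $t < v - v'$; and (ii) conversely, every stuck partition has the same bundle-value multiset as $\Pa$. Granting (ii), any descent from $\R$ ends at a partition realizing $\Pa$'s values, which finishes the proof. As in Claim~\ref{claim:H-ord}, I may assume without loss of generality that the $a_j$ are distinct.

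The heart of the matter --- and the step I expect to be the main obstacle --- is (ii), the structural characterization of stuck partitions, along the lines of Lemma~4 of~\cite{lpt-wu}. First I would show that a stuck partition has every bundle of size at most two: since $\tau - 1 \le 2n$, a bundle of size $\ge 3$ forces (by a pigeonhole count) the existence of a bundle of small enough value so that moving the smallest good of the large bundle is a valid balancing move. Next, using the ordering $a_1 \ge a_2 \ge \cdots$, I would show that the $n-h$ largest goods must occupy singletons (otherwise a swap pulling a large good out of a pair into a smaller singleton strictly shrinks a gap), and finally that the remaining $2h$ goods must be paired largest-with-smallest (any other pairing admits a swap between two pair-bundles that strictly shrinks their gap, again via Proposition~\ref{prop:maj}). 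The delicate points throughout are verifying that each proposed move genuinely satisfies the strict inequality $t < v - v'$ needed to invoke Proposition~\ref{prop:maj}, and controlling the interplay between bundle cardinalities and the value ordering; this is exactly where the ordered structure of the instance and the distinctness of the $a_j$ enter.
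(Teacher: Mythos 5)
Your architecture---balancing moves justified by Proposition~\ref{prop:maj}, termination via a Schur-convex potential, and a characterization of stuck partitions---is essentially a fixed-point reframing of the paper's proof, which uses the very same moves (filling empty bundles, swapping the top $n-h$ goods into singletons, breaking size-$\geq 3$ bundles, un-crossing pairs). But there is a genuine gap at the step you yourself flag as the heart of the matter, namely (ii)(a): that a bundle $R_i$ of size $\geq 3$ always admits a balancing move. Pigeonhole does give you a singleton $\{g_j\}$ with $j > n-h$, but it does \emph{not} give you that this singleton is ``of small enough value'': validity of the move requires $v_1(g_j) + v_1(g) < v_1(R_i)$, where $g$ is the smallest good of $R_i$, and since $v_1(R_i) - v_1(g) \geq 2v_1(g)$ the paper obtains this from the chain $v_1(g_j) \leq v_1(g_{n-h+1}) \leq v_1(A_1) \leq 2v_1(g_{n+h}) \leq 2v_1(g)$. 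The two middle links of that chain are external to the combinatorics of ordered values: they come from the EFX property of the algorithm's allocation (Lemma~\ref{lemma:efx}, applied to the bundle $\{g_{n-h+1}, g_{n+h}\}$ of $\Pa$) and from the definition of $\tau$ and $H$ (every good in $H$ has value exceeding $\frac{1}{2}v_1(A_1)$). Your proposal never invokes either fact, and instead attributes the needed inequalities to ``the ordered structure of the instance and the distinctness of the $a_j$,'' which is demonstrably insufficient.

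Concretely, take $n=2$, $h=2$, and values $(a_1,a_2,a_3,a_4)=(10,\,1.2,\,1.1,\,1)$, which are ordered and distinct. Then $\Pa=(\{g_1,g_4\},\{g_2,g_3\})$ has value multiset $\{11,\,2.3\}$, while $\R=(\{g_1\},\{g_2,g_3,g_4\})$ has $\{10,\,3.3\}$. This $\R$ is stuck under your moves (every transfer or exchange has net value at least the gap $6.7$), its value multiset differs from $\Pa$'s, and indeed $\R$ does \emph{not} majorize $\Pa$, since the top-$1$ sums compare as $10 < 11$. So both your claim (ii) and the statement of Claim~\ref{claim:maj} itself are false for arbitrary ordered instances; they hold only because the EFX-derived inequality $v_1(g_{n-h+1}) \leq 2v_1(g_{n+h})$ excludes such value profiles (here $10 > 2 \cdot 1$). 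Any completion of your argument must therefore import the EFX guarantee and the definition of $H$ at step (a); the remainder of your plan---steps (b) and (c), which genuinely do need only ordering and distinctness, the termination potential, and the appeal to transitivity of majorization---is sound and matches the paper's reasoning.
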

\begin{proof}
	We prove Claim~\ref{claim:maj} by a repeated application of Proposition~\ref{prop:maj}. From Claim~\ref{claim:H-ord}, we know that every bundle in $\Pa$ is of size at most 2. In $\Pa$, the goods $\{g_1,g_2,\dots,g_{n-h}\}$ are not paired up with any other goods in $H$. And in $\Pa$, the good $g_{n-i}$ is paired up with the good $g_{n+1+i}$ for all $0\leq i \leq h-1$.
	For some $t$, let $\R \in\Pi_n(H)$ be a partition that minimizes the sum of the value of the largest $t$ bundles in $\R$ (with respect to agent~1). We will show that $\R$ can transformed to $\Pa$ without increasing the total weight of the largest $t$ bundles:
	\begin{enumerate}
		\item For any empty bundle in $\R$, we can move a good from any bundle containing more than one good to this empty bundle. Note that, this step does not increase the weight of the sum of the largest $t$ bundles in $\R$. i.e. We can modify $\R$ so that no bundle in $\R$ is empty.
		\item Consider any $i\leq n-h$ such that $g_i$ is part of a bundle with more than 1 good in $\R$. By pigeonhole principle, there exists a $j>n-h$ such that $\{g_j\}$ is a singleton bundle in $\R$. Then from Proposition~\ref{prop:maj}, by exchanging $g_i$ and $g_j$, we will not increase the total weight of the largest $t$ bundles.
		i.e., We can modify $\R$ so that, for all $i\leq n-h$, $\{g_i\}$ is a singleton bundle. 
		\item Consider any bundle in $\R$ with at least 3 goods (We use $R_i$ to denote such a bundle). Let $g=\min\{R_i\}$ be the smallest good in $R_i$. Then by pigeonhole principle, there exists a $j>n-h$ such that $\{g_j\}$ is a singleton bundle in $R$. 
		$$(v_1(R_i)-v_1(\{g\})) - (v_1(\{g\}) + v_1(\{g_j\})) < v_1(R_i) - v_1(\{g_j\}) $$
		Since $j>n-h$, this implies that $v_1(\{g_j\}) \leq v_1(\{g_{n-h+1}\})$. Using the EFX property followed by the definition of $H$, we have $v_1(\{g_{n-h+1}\})\leq v_1(A_1) \leq 2v_1(\{g_{n+h}\})$. Since $R_i$ has at least 3 items, we know that $2v_1(\{g_{n+h}\}) \leq 2v_1(\{g\}) \leq v_1(R_i)-v_1(\{g\})$.
		i.e., $ v_1(\{g_j\}) \leq v_1(\{g_{n-h+1}\}) \leq v_1(A_1) \leq 2v_1(\{g_{n+h}\}) \leq 2v_1(\{g\}) \leq v_1(R_i)-v_1(\{g\}) $.
		
		Hence, we have:
		$$ (v_1(R_i)-v_1(\{g\})) - (v_1(\{g\}) + v_1(\{g_j\})) \geq -v_1(\{g\}) \geq  v_1(\{g_j\})-v_1(R_i)$$
		Therefore, $|(v_1(R_i)-v_1(\{g\})) - (v_1(\{g\}) + v_1(\{g_j\}))| < v_1(R_i) - v_1(\{g_j\})$.
		Then from Proposition~\ref{prop:maj}, by moving $g$ from $R_i$ to the bundle $\{g_j\}$, we will not increase the total weight of the largest $t$ bundles.
		i.e., We can adjust $\R$ so that every bundle has atmost 2 items.
		\item  Therefore, now the only bundles that are in $\R$ but not in $\Pa$ are bundles having 2 items. Suppose in $\R$, there are 2 bundles $\{g_{i_1},g_{i_2}\}$ and $\{g_{j_1},g_{j_2}\}$, such that $i_1<j_1<i_2<j_2$. This implies that, $|v_1(\{g_{i_1},g_{i_2}\})-v_1(\{g_{j_1},g_{j_2}\})| \leq |v_1(\{g_{i_1},g_{j_2}\})-v_1(\{g_{j_1},g_{i_2}\})|$. Therefore from Proposition~\ref{prop:maj}, by exchanging $g_{i_2}$ and $g_{j_2}$, we will not increase the total weight of the largest $t$ bundles.
		i.e., We can adjust $\R$ so that such pairs of bundles are not their in $\R$. Now, $\R$ is the same as $\Pa$ (up to a reordering). 
	\end{enumerate}
	
	Therefore we have showed that, for any $t$, $\Pa$ minimizes the sum of the largest $t$ bundles.
	
\end{proof}

We will now complete the proof of Theorem~\ref{thm:main}. Let $\ell$ be the number of bundles in $\{A_2,A_3,\dots,A_n\}$ that do not contain any good with index greater than $\tau-1$. As mentioned above, for every such bundle, $A_i$ there exists a unique bundle $P_j$ in $\Pa$ such that $A_i = P_j$. 
Say, by reindexing, that $P_{n-\ell+1}, P_{n-\ell+2}, \ldots, P_n$ are these $\ell$ bundles in $\A$ that do not satisfy the condition in Claim~\ref{claim:propo}. 

Let $(M_1, M_2, \ldots, M_n)$ denote a partition of the $[m]$ goods that achieves the maximin share for player $1$, i.e., $\min_{j \in [n]} v_1(M_j) = \mu_1$ (see Definition~\ref{defn:maxmin}). Consider the partial allocation $(M_1 \cap H, M_2 \cap H, \ldots, M_n \cap H )$ and index the sets $M_i$s such that $v_1(M_1 \cap H) \leq v_1(M_2 \cap H) \leq \ldots \leq v_1(M_n \cap H)$. Claim~\ref{claim:maj} implies that $(M_1 \cap H, M_2 \cap H, \ldots, M_n \cap H )$ majorizes $\Pa$. Hence, $\sum_{i=n-\ell+1}^{n} v_1(M_i \cap H) \geq \sum_{i=n-\ell+1}^{n} v_1(P_i).$ This inequality along with the fact that the valuations are monotone lead to the  following useful bound
\begin{align}
\label{ineq:prefix}
\sum_{i=n-\ell+1}^{n} v_1(M_i) \geq \sum_{i=n-\ell+1}^{n} v_1(P_i).
\end{align}

Recall that $P_{n-\ell+1}, P_{n-\ell+2}, \ldots, P_n$ are bundles in $\A$, and the remaining $(n-\ell)$ bundles of $\A$---say, $A_1, A_2, \ldots, A_{n-\ell}$---satisfy the inequality (\ref{ineq:propo}). Since $v_1$ is additive, we have $\sum_{i=1}^n v_1(M_i) = \sum_{i=1}^n v_1(A_i) $. Therefore, inequality (\ref{ineq:prefix}) provides the following bound
\begin{align}
\sum_{i=1}^{n-\ell} v_i(M_i)  & \leq \sum_{i=1}^{n-\ell}  v_1(A_i) \\  & \leq v_1(A_1)+(n - \ell-1) \frac{3}{2} v_1 (A_1) \qquad \text{   (via Claim~\ref{ineq:propo})} \label{ineq:final}
\end{align}

Overall, inequality (\ref{ineq:final}) implies that $v_1(A_1)$ is at least $\frac{2(n-\ell)}{3(n-\ell)-1}$ times the average of the $(n-\ell)$ sets $M_{\ell+1}, M_{\ell+2}, \ldots, M_n$. Hence, $v_1(A_1) \geq \frac{2n}{3n-1} \min_{j \in [n]} v_1(M_j) = \frac{2n}{3n-1} \mu_1$. This completes the proof.

\section{Submodular Valuations}

In this section we show that when the agents have submodular valuations, then a $0.21$-approximate maximin fair allocation is guaranteed to exist and, moreover, it can be found in polynomial time. Our results only need oracle access to the submodular functions.

\subsection{Finding an Approximate Maximin Fair Allocation}

We compute an approximate maximin fair allocation by employing Algorithm~\ref{alg:mms-const} (\ALGS) as a subroutine. Recall that computing the maximin shares is an {\rm NP}-hard problem, hence we execute the subroutine $\ALGS$ with threshold $\tau_i$ (for each agent $i \in [n]$), instead of the actual maximin share $\mu_i$. Below we detail properties of $\ALGS$ which enable us to update $\tau_i$s and successively use this subroutine to obtain the desired approximation guarantee. 

$\ALGS$ is quite direct: it takes as input thresholds, $\tau_i$s, allocates high-valued (with respect to $\tau_i$) goods as singleton bundles, and then partitions the remaining goods in a round-robin fashion. The technical contribution here is to show that, for each agent $i$, if $\tau_i$ is less than or equal to the maximin share $\mu_i$, the bundle $P_i$ allocated to $i$ by $\ALGS$ satisfies $v_i(P_i) \geq \frac{1}{3} \left( 1 - \frac{1}{e} \right) \tau_i$. It is relevant to note that this guarantee holds independently for each agent $i$ (as long as $\tau_i \leq \mu_i$) and will not be violated even if $\tau_j > \mu_j$ for $j \neq i$. Formally, we establish this guarantee in Lemma~\ref{lemma:ind-guarantee} and use its contrapositive version (i.e., if $\ALGS$ returns a $P_i$ such that $v_i(P_i) < \frac{1}{3} \left( 1 - \frac{1}{e} \right) \tau_i$, then it must be the case that $\tau_i > \mu_i$) to establish the main result of this section (Theorem~\ref{thm:mms-const}). 

In particular, Algorithm~\ref{alg:mms-final} ($\ALGF$) starts by setting thresholds $\tau_i$s to be more than the maximin shares. Then, it (geometrically) decreases the threshold for agent $i$, if the partition returned by \ALGS \ does not satisfy $v_i(P_i) \geq \frac{1}{3} \left( 1 - \frac{1}{e} \right)  \tau_i$. In such a case, as stated previously, $\tau_i$ must have been greater than $\mu_i$ and, hence, decreasing the threshold is justified. Overall, in Section~\ref{sect:proof-submod} we show that \ALGF \ efficiently finds thresholds, for each agent, which are comparable to their maximin share, and establish the following theorem. 

\begin{theorem} [Main Result for Submodular Valuations] \label{thm:mms-const}
Given a maximin fair division instance with $m$ indivisible goods and $n$ agents that have nonnegative, monotone, submodular valuations $v_i: 2^{[m]} \mapsto \mathbb{R}_+$, $1 \leq i  \leq n$, the algorithm $\ALGF$ finds, in polynomial time, an allocation $\mathcal{P}=(P_1,P_2,\ldots, P_n)$ which satisfies $v_i(P_i) \geq 0.21 \ \mu_i$, for all  $ i \in [n]$. Here $\mu_i$ is the maximin share of agent $i$.
\end{theorem}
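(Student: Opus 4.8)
The plan is to obtain Theorem~\ref{thm:mms-const} from the per-agent guarantee described in the text (Lemma~\ref{lemma:ind-guarantee}: whenever $\tau_i \le \mu_i$, the bundle $P_i$ returned by $\ALGS$ satisfies $v_i(P_i) \ge \frac{1}{10}\tau_i$) by using its contrapositive to drive the threshold search of $\ALGF$. First I would set up $\ALGF$: initialize each threshold to an over-estimate of the maximin share, e.g.\ $\tau_i = v_i([m])$, which satisfies $\tau_i \ge \mu_i$ since every bundle in any $n$-partition has value at most $v_i([m])$ by monotonicity. Then repeatedly call $\ALGS$ on the current threshold vector; whenever the returned partition has an agent $i$ with $v_i(P_i) < \frac{1}{10}\tau_i$, geometrically shrink that threshold, $\tau_i \leftarrow c\,\tau_i$ for a fixed $c \in (0,1)$. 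The crucial point is the contrapositive: a failure $v_i(P_i) < \frac{1}{10}\tau_i$ can occur only when $\tau_i > \mu_i$. Hence a threshold is decreased only while it still strictly exceeds $\mu_i$, so it never falls below $c\,\mu_i$, and at termination (when no agent fails) every agent satisfies $v_i(P_i) \ge \frac{1}{10}\tau_i$ with $\tau_i$ comparable to $\mu_i$; the constants are arranged (through the geometric factor together with the slack built into Lemma~\ref{lemma:ind-guarantee}) to deliver exactly the stated $\frac{1}{10}\mu_i$.

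For polynomial running time I would argue as follows. Each $\tau_i$ only ever decreases, starts at $v_i([m])$, and stays above $c\,\mu_i$, so it is shrunk at most $O(\log_{1/c}(v_i([m])/\mu_i))$ times, which is polynomial in the bit-length of the oracle's outputs. The independence property emphasized in the text---that the guarantee for $i$ holds whenever $\tau_i \le \mu_i$, regardless of the other $\tau_j$---is exactly what makes the search converge: decreasing some $\tau_j$ can never turn a settled agent $i$ (one with $\tau_i \le \mu_i$) into a failing one, so once an agent is fixed she stays fixed. Since each non-terminating iteration of the main loop decreases at least one threshold, the total number of iterations is bounded by the total number of decreases across agents, hence polynomial, and each $\ALGS$ call is itself polynomial (sort to extract high-valued singletons, then one round-robin pass).

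The real content, and where I would spend most of the effort, is Lemma~\ref{lemma:ind-guarantee}, via multilinear extensions. Fix $i$ with $\tau_i \le \mu_i$ and let $(M_1,\dots,M_n)$ witness $\mu_i$, so $v_i(M_k) \ge \mu_i \ge \tau_i$ for every $k$; equivalently a uniformly random bundle $M_K$ has $\E[v_i(M_K)] \ge \mu_i$. If $i$ receives a high-valued singleton then $v_i(P_i) \ge \frac{1}{10}\tau_i$ is immediate, so the interesting case is the round-robin phase on the low-valued goods. Here I would pass to the multilinear extension $F_i$ of (a marginal of) $v_i$ and use Vondr\'ak's observation that the random-bundle value is captured, up to a constant, by $F_i$ evaluated at the uniform fractional point $\tfrac{1}{n}\mathbf{1}$. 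The bridge to the discrete process is the notion of \emph{expected ordered marginals}: one rewrites $F_i$ in terms of the expected marginal contributions of goods taken in decreasing value order, and then shows that in each round the actual gain to $i$'s bundle is comparable to the loss in the $F_i$-potential incurred by deleting the grabbed goods. Telescoping these round-by-round comparisons over the whole execution shows $v_i(P_i) \gtrsim F_i(\tfrac{1}{n}\mathbf{1}) \gtrsim \mu_i \ge \tau_i$, with the explicit constant $\frac{1}{10}$ coming from tracking the $(1-1/e)$-type loss of the extension, the loss from the high/low split, and the $1/n$ round-robin share.

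The main obstacle is precisely this last bridge. Submodularity makes an agent's marginal depend on what she already holds, so the naive ``$i$ gets a $1/n$ share of the value'' argument fails; the expected-ordered-marginals device is what lets one account for diminishing returns while still comparing the round-robin trajectory against the fractional benchmark $F_i(\tfrac{1}{n}\mathbf{1})$. A secondary but delicate subtlety is the independence of the guarantee from the other thresholds: although the concrete set $P_i$ depends on which goods the competitors seize, the round-robin structure must be shown to always leave $i$ a pool that is rich enough, in the value-ordered sense, for the multilinear bound to go through---and establishing this robustly, uniformly over all competitor behavior, is the heart of the argument.
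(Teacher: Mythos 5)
Your proposal is correct and follows essentially the same route as the paper: it derives Theorem~\ref{thm:mms-const} from the contrapositive of Lemma~\ref{lemma:ind-guarantee} exactly as \ALGF\ does---initializing $\tau_i = v_i([m]) \geq \mu_i$, geometrically shrinking only failing (hence provably overshooting) thresholds so that $\tau_i$ never drops below a constant fraction of $\mu_i$, and bounding the number of iterations by $\sum_i O\bigl(\log (v_i([m])/\mu_i)\bigr)$. Your accompanying sketch of Lemma~\ref{lemma:ind-guarantee} (high-valued singletons, the uniform fractional point $\frac{1}{n}\mathbf{1}$ via Vondr\'ak's proportionality bound, expected ordered marginals, and the round-by-round gain-versus-loss telescoping) likewise matches the paper's argument, up to glossing the incrementing/decrementing round dichotomy by which the paper controls the lossy rounds separately via geometric decay.
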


We begin by stating the subroutine \ALGS \ and establishing its key property in Lemma~\ref{lemma:ind-guarantee}.\footnote{Note that this lemma does not rule out the possibility that \ALGS \ finds a bundle $P_i$ which satisfies $v_i(P_i) \geq 0.21 \tau_i$, even if $\tau_i > \mu_i$.}

{
	\begin{algorithm}[H]
		{
		{\bf Input:} An instance over $m$ indivisible goods and $n$ agents, whose valuations, $v_i: 2^{[m]} \mapsto \mathbb{R}_+$, are nonnegative, monotone and submodular, along with thresholds $\tau_1, \ldots, \tau_n$. \\ 
		{\bf Output:} An allocation $\mathcal{P}=(P_1,P_2, \dots ,P_n)$ such that $v_i(P_i) \geq \frac{1}{3} \left(1 - \frac{1}{e}\right) \tau_i \geq 0.21 \ \tau_i$, for each agent $ i \in [n]$ whose  threshold satisfies $\tau_i \leq \mu_i$. Here $\mu_i$ is the $n$-maximin share of agent $i$.  
		\caption{Computation of allocations with respect to thresholds $\ALGS$}
		\label{alg:mms-const}
		\begin{algorithmic}[1]
			\STATE Initialize set of agents $A=[n]$, set of goods  $G=[m]$, and $P_i = \emptyset$ for all $i \in A$.
			\WHILE{there exist agent $i\in A$ and good $g \in G$ such that $v_i(g)\geq  \frac{1}{3} \left(1 - \frac{1}{e}\right) \ \tau_i$}
			\STATE Allocate $P_i\leftarrow\{g \}$, and update $A \leftarrow A \setminus\{i\}$ and $ G \leftarrow G\setminus\{g\}$.
			\ENDWHILE
			\STATE Assume, via reindexing, that the set of remaining agents $A=\{1,2, \ldots, |A|\}$. 
			\WHILE{$G\neq\emptyset$}
			\FOR{$i=1$ to $|A|$}
			\STATE Pick $g\in \argmax_{g' \in G} \ \left\{v_i(P_i\cup \{g' \})-v_i(P_i) \right\}$.
			\STATE $P_i\leftarrow P_i\cup\{g\}$, $G\leftarrow G\setminus\{g\}$
			\ENDFOR
			\ENDWHILE			
			\RETURN partition $\mathcal{P}=(P_1, P_2, \ldots, P_n)$.
		\end{algorithmic}
	}
	\end{algorithm}
}

\begin{lemma} \label{lemma:ind-guarantee}
For a given a maximin fair division instance with $m$ indivisible goods and $n$ agents, whose valuations, $v_i: 2^{[m]}\mapsto \mathbb{R}_+$, are nonnegative, monotone and submodular, let $\mathcal{P}=(P_1,P_2, \dots ,P_n)$ be the allocation returned by $\ALGS$ with thresholds $\tau_i \in \mathbb{R}_+$, $1 \leq i \leq n$. Then, for each agent $i$ whose input threshold satisfies $\tau_i \leq \mu_i$, we have 
\begin{align*}
v_i(P_i) \geq \frac{1}{3}\left( 1 - \frac{1}{e}\right) \tau_i \geq 0.21 \ \tau_i.
\end{align*} 
Here, $\mu_i$ is the $n$-maximin share of agent $i$.
\end{lemma}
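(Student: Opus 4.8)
The plan is to fix an agent $i$ with $\tau_i \le \mu_i$ and argue $v_i(P_i) \ge \frac{1}{10}\tau_i$. If $i$ is removed during the first \texttt{while}-loop, then $P_i=\{j\}$ for a good with $v_i(j)\ge\frac{1}{10}\tau_i$ and we are done immediately; so I would assume $i$ survives into the round-robin phase. I then extract two structural facts. First, the exit condition of the first loop gives that every good $j$ still available when the round-robin begins satisfies $v_i(j) < \frac{1}{10}\tau_i$; call this the \emph{small-singleton} property. Second, because $i$ is not removed in the first phase, at most $n-1$ agents (hence at most $n-1$ goods) are removed there, so by pigeonhole at least one block $M_{k^*}$ of $i$'s maximin partition $(M_1,\dots,M_n)$ survives entirely among the remaining goods; this gives a concrete target set $T := M_{k^*}$ with $v_i(T) \ge \mu_i \ge \tau_i$, every singleton of which is small. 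Both facts are used only through the valuation $v_i$, which is why the guarantee holds for $i$ independently of the thresholds of other agents.

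\textbf{A multilinear-extension potential.} Next I would bring in the multilinear extension $F_i$ of $v_i$ and the observation (attributed to Vondr\'ak) that a uniformly random bundle is valuable. Using concavity of the map $t \mapsto F_i(t\,\mathbf{1}_S)$ (which for monotone submodular $v_i$ gives $F_i(t\,\mathbf{1}_S)\ge t\,v_i(S)$) together with the surviving maximin block $T$, one shows that a random share of the remaining goods carries multilinear value a constant fraction of $\tau_i$. Along the execution of the round-robin I track the potential $\Phi_r := v_i(P_i^{r-1}) + F_i(y_r)$, where $P_i^{r-1}$ is $i$'s bundle before its $r$-th pick and $y_r$ is the fractional indicator that assigns to each still-available good the probability that $i$ eventually receives it. The potential starts at $\Phi_1 \ge c\,\tau_i$ for a constant $c$ and ends at $\Phi_{\mathrm{end}} = v_i(P_i)$, since all goods are eventually allocated. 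To compare the drop in the $F_i(y_r)$ term with $i$'s deterministic gain I would introduce \emph{expected ordered marginals}: order the remaining goods $s_1,s_2,\dots$ by marginal value $v_i(P_i^{r-1}\cup\{s_\ell\})-v_i(P_i^{r-1})$ and express $F_i(y_r)$ and its change as a weighted sum of these ordered marginals. The point of this reformulation is that the greedy step of \ALGS\ takes the \emph{largest} ordered marginal, which dominates the weighted average that the potential sheds.

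\textbf{Telescoping, and the main obstacle.} Finally I would show that in each turn the bundle gain $v_i(P_i^{r})-v_i(P_i^{r-1})$ is at least a constant times the decrease of the multilinear term, accounting both for $i$'s own pick and for the at most $|A|-1$ picks made by the other agents in that round. Here the small-singleton property bounds how much any single competitor's pick can lower $F_i(y_r)$ (removing one good costs at most its singleton value $<\frac{1}{10}\tau_i$), while the ordered-marginal/greedy domination supplies the constant-factor compensation. Summing the per-turn inequalities telescopes $\Phi$ from $\Phi_1 \ge c\,\tau_i$ down to $\Phi_{\mathrm{end}} = v_i(P_i)$ and yields $v_i(P_i) \ge \frac{1}{10}\tau_i$. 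I expect the per-turn comparison to be the crux: a naive charging of each lost target good to the nearest greedy pick loses a factor of $|A| = \Theta(n)$ (consistent with the $\frac{1}{O(\log n)}$ barrier from plain rounding/concentration noted in the introduction), and it is precisely the fractional (multilinear) view combined with the expected-ordered-marginal domination that removes this factor and lets the constant be pinned down to $\frac{1}{10}$.
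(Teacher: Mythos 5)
Your overall architecture is the paper's: the phase-one dichotomy, the small-singleton property, a potential of the form $v_i(P)+\E_{R\sim u^G}[f_P(R)]$ (the paper tracks $F_P(u^G)$, the multilinear extension of the marginal function at the fixed uniform vector $u^G$, which telescopes exactly as you intend), and expected ordered marginals to show that one round's $|A|$ removals, each carrying weight $1/|A|$, cost at most one maximal marginal rather than a factor of $n$. But there are two genuine gaps. First, your pigeonhole extracts only \emph{one} surviving maximin block $T$, and the concavity bound $F_i(t\,\mathbf{1}_T)\geq t\,v_i(T)$ at $t=1/|A|$ then gives $\Phi_1\geq \tau_i/|A|$ --- a $1/n$ fraction, not the constant fraction your telescoping needs as its starting value. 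The paper instead observes that since at most $n-|A|$ goods are removed in phase one, at least $|A|$ blocks of $i$'s maximin partition survive intact, so the $|A|$-maximin share in the reduced instance satisfies $\widehat{\mu}_i\geq\mu_i$; it then invokes Vondr\'ak's social-welfare bound for the uniform fractional allocation (Lemma~\ref{lem:prop}) to get $F_i(u^G)\geq(1-\tfrac{1}{e})\,\widehat{\mu}_i$. One block cannot yield this constant.

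Second, your per-turn charging --- ``in each turn the bundle gain is at least a constant times the decrease of the multilinear term'' --- is false, and your supporting claim that the greedy step takes the \emph{largest} ordered marginal only holds among goods still available at $i$'s turn. Agents ordered before $i$ within a round pick first, and one of them can remove a good $j$ with $f_P(j)$ arbitrarily larger than $i$'s subsequent greedy gain $f_P(g)$; then the round's multilinear loss $\ell_r=\max_{j\in L}\gamma_j^{P,G}(u^G)$ cannot be charged to that round's gain with any constant. The paper's fix is an amortized argument you are missing: define $\phi_r=\max\{\alpha f_P(g),\ell_r\}$ with $\alpha>1$, split rounds into incrementing ($\phi_r=\alpha f_P(g)$) and decrementing ($\phi_r=\ell_r$) ones, and prove that the $\ell_r$ values of successive decrementing rounds decay geometrically by the factor $\alpha$ --- because the good $i$ grabbed greedily in the earlier such round dominates, via submodularity, every marginal still available later --- anchored by $\ell_s\leq 0.1\mu$ at the first decrementing round, which is exactly where your small-singleton property enters. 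This caps the total decrementing loss at $\tfrac{0.1}{1-1/\alpha}\mu$, and the telescoped bound $(1+\tfrac{1}{\alpha})\sum_r\phi_r\geq(1-\tfrac{1}{e})\mu$ then leaves, after choosing $\alpha=1+\sqrt{2}$, the stated $0.1\mu$ for the incrementing rounds, which are the only ones charged to $v_i(P_i)$. Without this decomposition the telescoping cannot close, since decrementing rounds contribute loss but essentially no gain.
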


	


{
	\begin{algorithm}
		{
			{\bf Input:} An instance with $m$ indivisible goods and $n$ agents, whose valuations, $v_i: 2^{[m]} \mapsto \mathbb{R}_+$, are nonnegative, monotone and submodular. \\ 
			{\bf Output:} An allocation $\mathcal{P} = (P_1,P_2, \ldots, P_n)$ such that, $v_i (P_i) \geq 0.21 \mu_i$ for all $ i \in [n]$. Here $\mu_i$ is the $n$-maximin share of agent $i$. 
			\caption{Computation of an approximate maximin fair allocations $\ALGF$}
			\label{alg:mms-final}
			\begin{algorithmic}[1]
			\STATE For all $i \in [n]$, initialize $\tau_i  = v_i([m])$ and $P_i = \emptyset$.
			\COMMENT{We assume that $\mu_i >0$ for each agent $i$; agents that do not satisfy this assumption can be simply detected and removed from consideration.}
				\STATE Initialize $V = \{ i \in [n] \mid v_i (P_i) < 0.21 \tau_i \}$ (i.e., $V =  [n]$) and set $\delta \in (0,1)$ to be an arbitrarily small constant. 
			\WHILE{ $V \neq \emptyset$ }
				\STATE For all $i \in V$, update $\tau_i \leftarrow \frac{1}{(1+ \delta)} \tau_i $.
 			\STATE  Update the partition by executing \ALGS \ with the current threshold values: $(P_1, P_2, \ldots, P_n) \leftarrow \ALGS(\tau_1, \ldots, \tau_n)$.
			\STATE Update $V \leftarrow \{ i \in [n] \mid v_i (P_i) < 0.21 \tau_i \}$
				\ENDWHILE		
			\RETURN Partition $\mathcal{P}=(P_1, P_2, \ldots, P_{n}) $
			\end{algorithmic}
		}
	\end{algorithm}
}

\subsection{Multilinear Extensions} \label{sec:Multilinear}

Our analysis of \ALGS \ rests on studying the \emph{multilinear extension} of the valuation functions. This concept has been used in recent results for constrained submodular maximization; see, e.g.~\cite{calinescu2011maximizing,vondrak-matroid,kulik2009maximizing,lee2010maximizing,vondrak2013symmetry, chekuri2010dependent, chekuri2014submodular}. Formally, 

\begin{definition} [Multilinear Extension] \label{def:multilinear-extension}
For a function $f: 2^{[m]} \mapsto \mathbb{R}$, the multilinear extension $F: [0,1]^{[m]} \mapsto \mathbb{R}$ is defined as follows:
\begin{align*}
F(x) & :=\E_{R \sim x} [f(R)] = \sum_{R \subseteq [m] } f(R) \ \left( \prod_{g \in R} x_g  \  \prod_{g \in ([m] \setminus R)} (1-x_g) \right).
\end{align*}
Here sampling from $x \in [0,1]^m$ corresponds to selecting a random subset $R \subseteq [m]$ in which each $g \in [m]$ appears independently with probability $x_g$.
\end{definition}

We say an $n$-tuple $\chi :=(x^1,x^2,\dots,x^n)$ with $x^i\in[0,1]^m$ is a \emph{fractional allocation} iff $\sum_{i=1}^n x^i_g \leq 1$ for all $ g \in [m]$. A binary fractional allocation corresponds to a (partial) allocation. Also, note that the set of all fractional allocations forms a partition matroid polytope over the set $[m]$. 
Next we state and prove a useful property of the \emph{uniform} fractional allocation.

\begin{lemma}[Proportionality] \label{lem:prop}
Let $v_i: 2^{[m]} \mapsto \mathbb{R}_+$ denote nonnegative, monotone, submodular valuations of agents $i \in [n]$ over $m$ indivisible goods. Then, the fractional allocation $ \omega =(u^1,u^2,\dots,u^n)$---in which $u^i \coloneqq \left( \frac{1}{n}, \frac{1}{n}, \ldots,\frac{1}{n} \right) \in [0,1]^m$ for all $i \in [n]$---satisfies $V_i(u^i)  \geq \left(1-\frac{1}{e} \right)\mu_i $ for all $i \in [n]$.
\noindent
Here $V_i:[0,1]^m\mapsto \mathbb{R}_+$ is the multilinear extension of $v_i$ and $\mu_i$ is the maximin share of agent $i$.
\end{lemma}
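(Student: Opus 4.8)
The plan is to show that the uniform fractional allocation gives each agent a multilinear-extension value of at least $(1-1/e)\mu_i$. First I would fix an agent $i$ and let $(M_1, M_2, \ldots, M_n)$ be an optimal maximin partition for $i$, so that $v_i(M_k) \geq \mu_i$ for every $k \in [n]$. The key idea is that sampling from $u^i = (\frac{1}{n}, \ldots, \frac{1}{n})$ picks each good independently with probability $\frac{1}{n}$, which is exactly the probability that a good lands in any particular part if we were to throw each good uniformly at random into one of the $n$ bundles of the maximin partition.

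**The main argument.** I would relate the uniform fractional allocation to a random selection among the maximin bundles. Consider choosing an index $k \in [n]$ uniformly at random and then taking the bundle $M_k$; this yields a random set whose value is at least $\mu_i$ in expectation (in fact always, since each $v_i(M_k) \geq \mu_i$). The cleaner route, however, is to invoke the standard fact (used in the submodular-maximization literature, e.g.\ \cite{vondrak-matroid}) that for a monotone submodular function the multilinear extension evaluated at a point $x$ is lower-bounded by a correlation-gap / concavity-along-coordinates estimate. Concretely, I would use that for a monotone submodular $f$ and the uniform point $\frac{1}{n}\mathbf{1}_{[m]}$, one has $F(\frac{1}{n}\mathbf{1}) \geq (1 - (1-\frac{1}{n})^n)\, \max_{k} f(M_k)$ when $[m] = \sqcup_k M_k$, or more directly the bound that sampling each element independently with probability $\frac{1}{n}$ recovers a $(1-1/e)$ fraction of the best single part. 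Since $(1-\frac{1}{n})^n \leq \frac{1}{e}$, this gives $V_i(u^i) \geq (1-\frac{1}{e})\mu_i$.

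**The key estimate.** To make this rigorous I would argue as follows. Let $R \sim u^i$ and couple the sampling to the maximin partition: independently assign each good $j$ to be ``active'' with probability $\frac{1}{n}$, and think of the $n$ parts. By submodularity and monotonicity, $\E[v_i(R)]$ dominates the expected value obtained by the following process---pick a uniformly random part $M_k$ and keep each of its elements independently with probability governed by the sampling---and a concavity argument (the multilinear extension is concave along the all-ones direction scaled into each bundle) yields the factor $1 - (1-\frac{1}{n})^n$. I would then simplify $1 - (1-\frac{1}{n})^n \geq 1 - \frac{1}{e}$, and since $\max_k v_i(M_k) \geq \min_k v_i(M_k) = \mu_i$, conclude $V_i(u^i) \geq (1-\frac{1}{e})\mu_i$.

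**The main obstacle.** The hard part will be the coupling/concavity step that extracts the $(1-1/e)$ factor cleanly: the multilinear extension is not globally concave, so I must be careful to use concavity only along nonnegative directions (where monotone submodular multilinear extensions are indeed concave), and to correctly set up the comparison between ``sample uniformly at rate $\frac{1}{n}$ over all goods'' and ``recover the best maximin bundle.'' I expect the cleanest formalization to lean on the observation, attributed to \cite{vondrak-matroid}, that allocating a bundle uniformly at random to an agent already secures a constant fraction of the optimum in the multilinear extension; the remaining work is to identify that optimum with $\mu_i$ via the maximin partition and to verify the constant is exactly $1-\frac{1}{e}$.
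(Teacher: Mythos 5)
Your overall plan---pass to the maximin partition and invoke the random-allocation observation of \cite{vondrak-matroid}---is the right instinct and is in spirit the paper's route, but your ``key estimate'' is false as stated, and the mechanism you propose cannot produce it. The inequality $F\left(\frac{1}{n}\mathbf{1}\right) \geq \left(1-\left(1-\frac{1}{n}\right)^n\right)\max_k f(M_k)$ fails already for additive $f$: take $n=2$ and two goods of values $100$ and $1$; the maximin-optimal partition is $\left(\{g_1\},\{g_2\}\right)$ with $\max_k f(M_k)=100$, yet $F\left(\frac{1}{2}\mathbf{1}\right)=50.5 < 0.75\cdot 100$. The benchmark has to be the \emph{average} $\frac{1}{n}\sum_k f(M_k)$ (and hence the min, $\mu_i$), not the best single part. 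Relatedly, concavity of the multilinear extension along nonnegative directions only yields $F(p\,\mathbbm{1}_S)\geq p\,f(S)$, i.e.\ a factor $\frac{1}{n}$ at $p=\frac{1}{n}$; no coupling-plus-directional-concavity argument of the kind you sketch produces the constant $1-\frac{1}{e}$. That constant comes from a genuinely stronger fact, e.g.\ the correlation gap for monotone submodular functions, $F(x)\geq\left(1-\frac{1}{e}\right)f^+(x)$ where $f^+$ is the concave closure; since drawing a uniformly random part of the partition has marginals exactly $\frac{1}{n}$ on every good, $f^+\left(\frac{1}{n}\mathbf{1}\right)\geq\frac{1}{n}\sum_k f(M_k)\geq\mu_i$, which gives the lemma.

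For comparison, the paper assembles these ingredients differently and more simply: it considers $n$ clones of agent $i$, all with valuation $f=v_i$, defines the welfare $\mathcal{F}(\chi)=\sum_{j} F(x^j)$, and cites Vondr\'ak's guarantee that the uniform fractional allocation satisfies $\mathcal{F}(\omega)\geq\left(1-\frac{1}{e}\right)\max_{(P_1,\ldots,P_n)\in\Pi_n(m)}\sum_j f(P_j)$; by symmetry $F(u^i)=\frac{1}{n}\mathcal{F}(\omega)\geq\frac{1}{n}\left(1-\frac{1}{e}\right)\sum_j f(M_j^*)\geq\left(1-\frac{1}{e}\right)\mu_i$. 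Note how the comparison is against the \emph{sum} over the maximin parts, then divided by $n$. You had the right pieces (random part, marginals $\frac{1}{n}$, a correlation-gap-type fact), but mis-assembled them around the wrong benchmark, so the proof as proposed would not go through without replacing the $\max$-based estimate by one of these average-based arguments.
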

\begin{proof}
	Let $f: 2^{[m]}\mapsto \mathbb{R}_+$ be a nonnegative, monotone, submodular valuation function. Furthermore, for a fractional allocation $\chi =(x^1,x^2, \ldots, x^n)$, write $\mathcal{F}(\chi)$ to denote expected social welfare of $n$ agents with identical valuation function $f$, i.e., 
	\begin{align*}
	\mathcal{F}(\chi) &  := \sum_{i=1}^n F(x^i) \\ & = \sum_{i=1}^n \E_{R_i \sim x^i} [f(R_i) ].
	\end{align*}
	Here, $x^i \in [0,1]^m$ and $F$ is the multilinear extension of $f$. 
	
	Vondrak \cite{vondrak-matroid} (see Remark 2 on page 6) established that under the uniform fractional allocation $\omega=(u^1,u^2,\dots,u^n)$ the expected social welfare is at least $(1 - e^{-1})$ times the optimal:
	\begin{align}
	\mathcal{F}( \omega ) \geq \left( 1- \frac{1}{e}\right) \ \ \ \max_{(P_1, \ldots, P_n) \in \Pi_n(m)}  \ \sum_{i=1}^n f(P_i) \label{ineq:vondrak}
	\end{align}
	
	Write $(M_1^*, \ldots, M_n^*)$ to denote a partition that achieves the maximin share with respect to $f$, i.e., $(M_1^*, \ldots, M_n^*) \in \argmax_{(P_1, \ldots, P_n) \in \Pi_n(m)} \ \min_{i \in [n]} f(P_i)$. In addition, let $\mu$ be the maximin share; hence, $ \mu = \min_{i \in [n]} f(M_i^*)$
	
	Using the fact that $u^i = u^j $ for all $i, j \in [n]$ we have the following bound for the multilinear extension $F$.
	\begin{align*}
	F(u^i) & = \frac{1}{n} \sum_{j=1}^n F(u^j) \\
	& = \frac{1}{n} \mathcal{F}(\omega)  \\
	& \geq  \frac{1}{n} \left( 1- \frac{1}{e}\right) \ \ \ \max_{(P_1, \ldots, P_n) \in \Pi_n(m)}  \ \sum_{i=1}^n f(P_i)  \qquad \text{ (via inequality (\ref{ineq:vondrak})) } \\
	& \geq  \frac{1}{n} \left( 1- \frac{1}{e}\right) \ \  \sum_{i=1}^n f(M_i^*) \\ 
	& \geq   \frac{1}{n} \left( 1- \frac{1}{e}\right) \ \  n \mu
	\end{align*}
	
	
	Therefore, we have $F(u^i) \geq  \left( 1- \frac{1}{e}\right)  \mu$. We can simply instantiate this inequality for each valuation function $v_i$ to obtain the stated claim that $V_i(u^i) \geq (1-\frac{1}{e}) \mu_i$, for all $i \in [n]$.
\end{proof}

Recall that $f_H$ denotes the marginal function with respect to subset $H \subseteq [m]$, $f_H(S):= f(H\cup S)- f(H)$ for all $ S\subseteq [m]$. Write $F_H$ to denote the multilinear extension of $f_H$. 
	
We define $x(S)$ to be the projection of a vector $x  \in [0,1]^m$ onto $S \subseteq[m]$. That is, the $j$th component of $x(S)$ is equal to $x_j$ if $j \in S$ and it is equal to zero otherwise. 

In addition, we will say that a vector $y$ is supported over a set $S$ iff the set of nonzero components of $y$ is equal to $S$.  Hence, for $u \coloneqq \left( \frac{1}{n},\frac{1}{n}, \ldots, \frac{1}{n} \right)\in [0,1]^m$ and $S \subseteq [m]$, the projection vector $u(S) \in [0,1]^m$ is supported over $S$. In addition, the probability of drawing a subset $R$ from $u(S)$ is nonzero iff $R \subseteq S$.

The following property of multilinear extensions will be used in the proof of Lemma~\ref{lemma:ind-guarantee}. 

\begin{proposition}  \label{claim:main-ineq2}
Given a monotone, submodular function $f: 2^{[m]} \mapsto \mathbb{R}_+$ along with subset  $P \subseteq [m]$, good $g \in [m]$, and vector $x \in [0,1]^m$, the multilinear extensions of the marginal functions satisfy $F_{P \cup \{g\}} (x )  \geq F_{P} (x ) - f_P(g)$.
\end{proposition}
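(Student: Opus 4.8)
The plan is to unwind everything down to the level of $f$ itself and watch the additive constants cancel. By definition $F_{P\cup\{g\}}(x)=\E_{R\sim x}[f_{P\cup\{g\}}(R)]=\E_{R\sim x}[f(P\cup\{g\}\cup R)]-f(P\cup\{g\})$, and likewise $F_P(x)=\E_{R\sim x}[f(P\cup R)]-f(P)$, while $f_P(g)=f(P\cup\{g\})-f(P)$. First I would substitute these three expressions into the claimed inequality $F_{P\cup\{g\}}(x)\geq F_P(x)-f_P(g)$ and collect terms.

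After substitution the right-hand side becomes $\E_{R\sim x}[f(P\cup R)]-f(P)-\big(f(P\cup\{g\})-f(P)\big)=\E_{R\sim x}[f(P\cup R)]-f(P\cup\{g\})$, while the left-hand side reads $\E_{R\sim x}[f(P\cup\{g\}\cup R)]-f(P\cup\{g\})$. The summand $-f(P\cup\{g\})$ now appears on both sides and cancels, so the entire statement reduces to the clean inequality $\E_{R\sim x}[f(P\cup\{g\}\cup R)]\geq \E_{R\sim x}[f(P\cup R)]$.

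This reduced inequality I would verify termwise. For every realized subset $R\subseteq[m]$ one has $P\cup R\subseteq P\cup\{g\}\cup R$, so monotonicity of $f$ gives $f(P\cup R)\leq f(P\cup\{g\}\cup R)$. Since the expectation $\E_{R\sim x}$ is taken over the fixed product distribution in which each $j$ is included independently with probability $x_j$, taking expectations of this pointwise inequality preserves it, which finishes the argument.

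The only step requiring care is the bookkeeping of the four additive constants $\pm f(P)$ and $\pm f(P\cup\{g\})$; once these are tracked correctly there is no genuine obstacle, and the result drops out immediately. It is worth flagging two structural observations: the proof uses \emph{only} monotonicity of $f$ and never invokes submodularity, and it is insensitive to whether $g\in P$ (in which case $f_P(g)=0$ and both sides collapse to $F_P(x)$).
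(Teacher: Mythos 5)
Your proof is correct and is essentially the paper's own argument in rearranged form: both reduce the claim, via the definitions of $F_P$, $F_{P\cup\{g\}}$, and $f_P(g)$ and cancellation of the additive constants, to the pointwise monotonicity inequality $f(P\cup\{g\}\cup R)\geq f(P\cup R)$, and then conclude by taking expectation over $R\sim x$. Your observation that only monotonicity (not submodularity) is needed also matches the paper's proof, which likewise invokes monotonicity alone.
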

\begin{proof}
	For any $R\subseteq [m]$, by definition of the marginal function,  we have 
	\begin{align*}
	f_{P\cup\{g\}}(R) & = f(R\cup P\cup\{g\}) - f(P\cup\{g\}) \\
	& \geq f(R\cup P) - f(P\cup\{g\}) \qquad \text{ (monotonicity of $f$) }\\
	& = [f(R\cup P)-f(P)] - [f(P\cup\{g\})-f(P)]\\
	& = f_P(R) - f_P(g)
	\end{align*}
	Taking expectation with respect to $x$ (i.e., drawing $R \sim x$) we get the stated claim $F_{P\cup\{g\}}( x  ) \geq F_P(x) - f_P(g) $.
\end{proof}

\subsection{Proof of Lemma~\ref{lemma:ind-guarantee}}
\label{sect:proof-ind-guarantee}

In the proof we will fix an agent $i$ and---under the assumption that $\tau_i \leq \mu_i$---establish the stated claim, $v_i(P_i) \geq \frac{1}{3}\left(1 - \frac{1}{e} \right) \tau_i$, for $i$; an analogous argument establishes the claim for all other agents.
	
If agent $i$ was assigned a good in the first while-loop of $\ALGS$ (Algorithm~\ref{alg:mms-const}), then we have the desired inequality $v_i(P_i) \geq \frac{1}{3}\left(1 - \frac{1}{e} \right) \tau_i$. Otherwise, the first while-loop of $\ALGS$ terminates with, say, the good set $G=\{1,2, \ldots, |G|\}$ (denoting the set of unallocated goods) and agent set $A=\{1, 2, \ldots, |A|\}$ (denoting the set of agents who have not been allocated a single good yet). 

Let $\widehat{\mu}_i$ be agent $i$'s maximin share in this reduced instance, $\widehat{\mu}_i := \max_{(B_1, \ldots, B_{|A|}) \in \Pi_{|A|} (G) }  \min_{ j  \in [|A|] } v_i( B_j) $. Next we will show that $\widehat{\mu}_i \geq \mu_i $ and $v_i(P_i) \geq \frac{1}{3}\left(1 - \frac{1}{e} \right)  \widehat{\mu}_i$. The penultimate inequality and the assumption $\tau_i \leq \mu_i$ imply that $\tau_i \leq \widehat{\mu}_i$. Therefore, establishing an approximation guarantee in terms of $\widehat{\mu}_i$ (i.e., finding a $|A|$-partition $(P_1, \ldots, P_{|A|})$ of $G$ such that $v_i(P_i) \geq \frac{1}{3}\left(1 - \frac{1}{e} \right)  \widehat{\mu}_i$ for all $i \in A$) will prove the lemma. Let $ \mathcal{M}^* = (M_1^*, \ldots, M_n^*) $ denote a partition that achieves the maximin share with respect to $v_i$ in the original instance: $ \mathcal{M}^* \in \argmax_{(B_1, \ldots, B_n) \in \Pi_n(m)} \ \min_{j \in [n]} v_i (B_j)$. Since $|[m]\setminus G| = n - |A|$, there are at most $n-|A|$ subsets $M^*_j$s which intersect with $[m]\setminus G$. In other words, there are at least $|A|$ bundles in $\mathcal{M}^*$ that are completely contained in $G$. Recall that $v_i(M_j^*) \geq \mu_i$ for all $j \in [n]$ and $v_i$ is monotone. Hence, we can partition $G$ into $|A|$ sets, say $(M_1, \ldots, M_{|A|})$, such that $v_i(M_j) \geq \mu_i$ for all $j \in [|A|]$. This proves that $\widehat{\mu}_i \geq \mu_i$. Therefore, in the rest of the proof we focus only on the reduced instance and, with slight abuse of notation, set $n = |A|$ and $m = |G|$. 

We will now analyze the second while-loop of $\ALGS$, which allocates the goods in $G$ to agents in $A$. A complete execution of the inner for-loop (Steps 7-10) will be called a \emph{round}. So, in every round, each agent $i$ is allocated exactly one good.\footnote{Without loss of generality, we can assume that the number of goods in $G$ is a multiple of $|A|$. We can do this by adding dummy goods of value $0$.} 
	
In the remainder of the proof, $f$ will be used to denote $v_i$ and $F$ to denote the multilinear extension of $f$. Write $\mu$ to denote $\widehat{\mu}_i$ and $u \in [0,1]^m$ to denote the uniform vector $\left( \frac{1}{n}, \frac{1}{n}, \ldots, \frac{1}{n} \right)$. For a particular round $r$, we will use $P^{r}$ to denote the set of goods allocated to the agent $i$ by the end of round $r$ (i.e., before round $r+1$), $G^{r}$ to denote the set of goods that remain unallocated after round $r$ (equivalently before round $r+1$), $g^r$ to denote the good allocated to agent $i$ in the round $r$, and $L^r$ to denote the set of goods allocated during $r$.  

Similarly, $G^{r+1}$ is the set of unallocated goods after round $r+1$ and $P^{r+1}$ is the bundle assigned to the agent after round $r+1$. Note that $G^{r+1} = G^r \setminus L^{r+1}$, $P^{r+1} = P^r \cup \{ g^{r+1} \}$; where $L^{r+1}$ is the set of all the goods assigned during round $r+1$ and $g^{r+1} \in L^{r+1}$ is the good assigned to agent $i$ in that round. We will follow the convention of denoting the set of goods that are assigned in the second while-loop of the algorithm by $G^0 \coloneqq G$. 

To account for the gain in valuation of the agent across rounds, we will consider the (expected) marginal value of a uniformly random subset of $G^{r}$ (i.e., the expected marginal value of a random subset of the goods that remain unallocated after round $r$); specifically, write 
\begin{align*}
E^{r} := F_{P^{r}}\left( u({G^{r}}) \right) = \E_{S \sim u({G^{r}})} \left[ f_{P^{r}} (S) \right].
\end{align*} 

Recall that $u= \left( \frac{1}{n}, \frac{1}{n}, \ldots, \frac{1}{n} \right)$ and the $g$th component of the projected vector $u(G^r)$ is equal to $1/n$ if $g \in G^r$ and this component is zero otherwise. 

Let $E^0 := F(u({G^0}))$ and note that Lemma~\ref{lem:prop} gives us $E^0 \geq \left( 1- \frac{1}{e} \right)\mu $. {Initially $P^0 = \emptyset$ and, hence, $E^0 = F_{P^0} \left( u({G^0}) \right)$.} 

We will now provide a bound for $E^r$ in terms of $E^{r+1}$, which will lead us to a telescoping sum. 

\begin{align*}
E^{r} & = \E_{S \sim u({G^{r}})} \left[ f_{P^{r}} (S) \right] \\
& \leq \E_{S \sim u({G^{r}})} \left[ f_{P^r} ( S \cap G^{r+1} ) + f_{P^r} ( S \cap L^{r+1}) \right]  \tag{since $G^r = G^{r+1} \cup L^{r+1}$ and $f_{P^r}$ is submodular} \\
& = \E_{S \sim u({G^{r+1}})} \left[ f_{P^r} ( S) \right] + \E_{S \sim u({L^{r+1}})} \left[ f_{P^r} ( S) \right]  \tag{via the definition of $u({G^{r+1}})$ and $u({L^{r+1}})$} \\
& = F_{P^r} \left( u({G^{r+1}}) \right) +  F_{P^r} \left( u(L^{r+1}) \right)  \tag{using the definition of multilinear extensions} 
\end{align*}

Applying Propostion~\ref{claim:main-ineq2} (with $P = P^r$, $g = g^{r+1}$, and $x =u(G^{r+1})$) we have $ F_{P^r} \left( u(G^{r+1}) \right) \leq F_{P^{r+1}} \left( u(G^{r+1}) \right) + f_{P^r} {(g^{r+1})}$. Note that $E^{r+1} = F_{P^{r+1}} \left( u(G^{r+1}) \right)$. Therefore, using the abovementioned bound for $E^r$, we get 
\begin{align}
E^r \leq E^{r+1} +  f_{P^r} {(g^{r+1})} + F_{P^r} \left( u(L^{r+1}) \right) \label{ineq:intermediate}
\end{align}

The trailing term $F_{P^r} \left( u(L^{r+1}) \right)$ can be upper bounded as follows 
\begin{align}
F_{P^r} \left( u(L^{r+1}) \right) & = \E_{S \sim u(L^{r+1})} \left[ f_{P^r} (S) \right] \notag \\
& \leq \E_{S \sim u(L^{r+1})} \left[ \sum_{g \in L^{r+1}} f_{P^r} ( S \cap \{g \})  \right]  \tag{since $f_{P^r}$ is submodular} \\
& = \sum_{g \in L^{r+1}} f_{P^r}(g) \ \frac{1}{n}  \label{ineq:interim2}
\end{align}
The last inequality follows from the fact that, the probability of drawing any $g \in L^{r+1}$ under $u(L^{r+1})$ (i.e., the $g$th component of $u(L^{r+1})$) is equal to $1/n$.   

To bound $E^r - E^{r+1}$ for all rounds $r \geq 1$, note that for all goods $g$, the submodularity of $f$ implies $f_{P^r}(g) \leq f_{P^{r-1}} (g)$. Furthermore, using the fact that agent $i$ is assigned a good with the maximum possible marginal value in each round $r \geq 1$, we get $f_{P^{r-1}} (g) \leq  f_{P^{r-1}}(g^r)$ for each good $g \in L^{r+1}$.\footnote{The goods in the set $L^{r+1}$ were unallocated when agent $i$ was assigned $g^r$ in round $r$.}  Therefore, equations (\ref{ineq:interim2}) and $|L^{r+1}|=n$ imply
\begin{align} 
F_{P^r} \left( u(L^{r+1}) \right) \leq f_{P^{r-1}}(g^r) \label{ineq:interim3}
\end{align}

Overall, inequalities (\ref{ineq:intermediate}) and (\ref{ineq:interim3}) establish the following for all rounds $r \geq 1$ 
\begin{align}
E^r - E^{r+1} & \leq f_{P^r} {(g^{r+1})} + f_{P^{r-1}}(g^r)  \notag \\
& = \left(f(P^{r+1}) - f(P^r) \right) + \left(  f(P^r)  - f(P^{r-1}) \right) \notag \\
&= f(P^{r+1}) - f(P^{r-1}) \label{ineq:telescope}
\end{align}

Next we address the $r=0$ case. The preprocessing performed in the first while-loop of $\ALGS$ ensures that $f(g) \leq \frac{1}{3} \left( 1  - \frac{1}{e}\right) \tau_i \leq \frac{1}{3} \left( 1  - \frac{1}{e}\right) \mu $ for all $g \in G^0$. Using this bound, along with inequalities (\ref{ineq:intermediate}) and  (\ref{ineq:interim2}), for round $r=0$ we get\footnote{Recall that $P^0 = \emptyset$. In addition, $|L^{r+1}| = n$; as mentioned previously, we can assume, without loss of generality, that the number of goods $m$ is an integral multiple of $n$ and, hence, each agent gets a good in every round.}

\begin{align}
E^0 - E^1 & \leq f(g^1) + \frac{1}{3} \left( 1  - \frac{1}{e}\right) \mu \notag \\
& = f(P^1) + \frac{1}{3} \left( 1  - \frac{1}{e}\right)\mu \label{ineq:tele1}
\end{align}

Let $T$ denote the total number of rounds executed by $\ALGS$. By the end of the $T$th round all the goods are allocated, hence $G^{T} = \emptyset$ and $E^T = 0$. Therefore, considering a telescoping sum we get
\begin{align}
E^0 &= \sum_{r=0}^{T-1} E^r - E^{r+1} \notag \\
& = \frac{1}{3} \left( 1  - \frac{1}{e}\right)\mu + f(P^1) + \sum_{r=1}^{T-1} \left( f(P^{r+1}) - f(P^{r-1}) \right) \tag{via inequalities (\ref{ineq:tele1}) and (\ref{ineq:telescope})} \\
& = \frac{1}{3} \left( 1  - \frac{1}{e}\right)\mu + f(P^T) + f(P^{T-1}) \notag \\
& \leq 2 f(P^T) + \frac{1}{3} \left( 1  - \frac{1}{e}\right)\mu \label{ineq:end}
\end{align}

Write $\widehat{P}$ to denote the bundle assigned to agent $i$ by $\ALGS$. To recap, we have proved that when agent $i$ is assigned a good in the first while-loop of $\ALGS$, then the stated bound holds, i.e., $f(\widehat{P})\geq  \frac{1}{3}\left( 1 - \frac{1}{e}\right) \tau_i$. 

On the other hand, when agent $i$ is not assigned a good in the first while-loop of $\ALGS$, then $\ALGS$ terminates with $\widehat{P} = P^T$,  which satisfies $2 f(P^T) \geq E^0 - \frac{1}{3} \left( 1  - \frac{1}{e}\right)\mu$; see inequality (\ref{ineq:end}). Lemma~\ref{lem:prop} implies that $E^0 \geq \left( 1 - \frac{1}{e} \right) \mu$. Therefore, even in this case we have $f(\widehat{P}) \geq \frac{1}{3} \left( 1  - \frac{1}{e}\right)\mu \geq \frac{1}{3} \left( 1  - \frac{1}{e}\right)\tau_i$ (recall the assumption $\tau_i \leq \mu_i$ in the lemma statement). This completes the proof.



\subsection{Proof of Theorem~\ref{thm:mms-const}}
\label{sect:proof-submod}
In \ALGF \ the initial value of $\tau_i \ (=v_i([m])) $ is guaranteed to be at least $\mu_i$. Furthermore, Lemma~\ref{lemma:ind-guarantee} ensures that, for each agent $i$, \ALGF \ never decrements $\tau_i$ below $\frac{1}{1 + \delta} \mu_i$. Therefore, when \ALGF \ terminates, for every agent $i$, the final threshold value $\tau_i$ satisfies $\tau_i \geq \frac{1}{1 + \delta} \mu_i$ and $v_i(P_i) \geq \frac{1}{3} \left( 1 - \frac{1}{e} \right) \tau_i > 0.21 \tau_i$ (since the violated set $V = \emptyset$ at termination). In other words, when executed with a small enough constant $\delta \in (0,1)$, \ALGF \ returns a partition $(P_1, P_2, \ldots, P_{n})$ which satisfies $v_i (P_i) \geq 0.21 \mu_i$ for all $ i \in [n]$, i.e., satisfies the desired approximate fairness guarantee. 


Finally, we can bound the running time of \ALGF \ by observing that initially $\tau_i = v_i([m])$ and, hence, the maximum number of times that agent $i$ is contained in set $V$ is $\log_{(1+ \delta)} \left( \frac{v_i([m])}{\mu_i} \right)$.\footnote{As stated in the algorithm, agents whose maximin share is zero can be removed from consideration.} Overall, this bound ensures that the algorithm runs in polynomial time. \\


\section{Conclusions}

The algorithms developed in this paper find allocations that are not only (approximately) fair but also \emph{sequenceable}. Sequenceable allocations are allocations that can be obtained by ordering the agents (in a sequence) and then letting them select their most valued unallocated good one after the other. Bouveret and Lema\^{i}tre~\cite{bouveret2016efficiency} have studied this notion as an efficiency measure. Recall that our algorithm  for additive valuations (both the goods and the chores case) uses the reduction of Bouveret and Lema\^{i}tre~\cite{bouveret-basic}. This reduction explicitly provides a sequence (over the agents), following which we assign goods (with maximum possible marginal values) to the agents. Hence, the allocations computed in the additive setting are sequenceable. The algorithm developed for submodular valuations also induces a sequence over the agents: agents that participate in the preprocessing step appear exactly once in the sequence and can select their most valued unallocated good. The remaining agents are repeated in the sequence---one after the other---multiple times, since the remainder of the algorithm follows a round-robin procedure. It is shown in~\cite{bouveret2016efficiency} that  every \emph{Pareto-optimal} allocation is also sequenceable. These observations lead to the open, interesting question of whether we can efficiently compute allocations which are both (approximately) fair and Pareto optimal. Note that the existence of such allocations follows from the fact that any allocation that Pareto dominates an $\alpha$-approximate maximin fair allocation is also $\alpha$-approximate maximin fair. 

Focusing on additive valuations, the work of Amanatidis et al.~\cite{amanatidis2016truthful} studies maximin fair division in a strategic setting.  Understanding if the ideas developed in this paper---along with~\cite{ghodsi2017fair}---can be used to address maximin fair division among strategic agents with, say, submodular valuations remains an interesting direction for future work.


\section*{Acknowledgements}
Siddharth Barman gratefully acknowledges the support of a Ramanujan Fellowship (SERB - {SB/S2/RJN-128/2015}) and a Pratiksha Trust Young Investigator Award. Sanath Kumar Krishnamurthy gratefully acknowledges the support of the Akiko Yamazaki and Jerry Yang Engineering Fellowship.

\bibliographystyle{alpha}
\bibliography{mainbib}

\appendix


\section{Fair Division of Chores} \label{proof:chores}

In this section we study the fair division of chores (negatively valued goods) under additive valuations.

\subsection{Notation and Preliminaries}
Notations here are similar to the ones used in Section~\ref{sect:additive}. Write $[n]=\{1, 2, \ldots, n\}$ to denote the set of agents and $[m]=\{1,2, \ldots, m\}$ to denote the set of indivisible chores. The valuation function of an agent $i$ for a subset of chores $S \subseteq [m]$ is denoted by $v_i(S)$. We assume that agents have additive valuations. That is, for each agent $i \in [n]$ 
\begin{align*}
v_i(S) & \coloneqq \sum_{g \in S} v_i(g).
\end{align*}  
Here, $v_i(g)$ is the value of chore $g \in [m]$ for agent $i \in [n]$. Since we are allocating chores among the agents, we will assume that the valuation of any chore is nonpositive, $v_i(g)\leq 0$ for all $i\in [n]$ and $g\in [m]$.

Our fairness guarantee is in terms of maximin shares. Formally, 

\begin{definition}[Maximin Share]
	\label{defn:maxmin:chores}
	For an agent $i \in [n]$ and a subset of chores $S \subseteq [m]$, the $n$-maximin share is defined to be:
	\begin{align}
	\mu_i^n(S) \coloneqq \max_{ (M_1, M_2, \ldots, M_n) \in \Pi_n(S)} \  \min_{k \in [n]} v_i(M_k).
	\end{align}
\end{definition}

Ideally, we would like to ensure fairness by partitioning the chores such that each agent gets her maximin share, i.e., partition the chores into subsets  $(A_1, A_2, \ldots, A_n) \in \Pi_n(m)$ such that  $v_i(A_i) \geq \mu_i$ for all $ i \in [n]$. Since such partitions do not always exist \cite{aziz-chores}, a natural goal is to study approximation guarantees. In particular, our objective is to develop efficient algorithms that determine a partition $(A_1, \ldots, A_n) \in \Pi_n(m)$ wherein each agent, $i$, gets a bundle, $A_i$, of value (under $v_i$) at least $\alpha$ times her maximin share, with $\alpha \in [1,\infty)$ being as small as possible. We call such partitions as $\alpha$-approximate maximin fair allocations. When $\alpha=1$, we say that the allocation is maximin fair. Observe that in the context of chores, our approximation factors are greater than 1. This is because, in this case, the maximin shares are nonpositive.

We present an efficient algorithm that finds $4/3$-approximate maximin fair allocations of chores. This result improves upon the $2$-approximation bound of \cite{aziz-chores}.

\subsection{Approximation Guarantee for Chores}
Using techniques similar to the case of goods, we prove an approximate maximin shares guarantee for fair division of chores with additive valuations.

\begin{theorem}[Main Result for Chores]
\label{thm:main-chores}
Given a set of $n$ agents that have additive valuations, $\{v_i\}_{i \in [n]}$, for a set of $m$ indivisible chores (i.e., $v_{i}(g) \leq 0$ for all $i \in [n]$ and $g \in [m]$), we can find, in polynomial time, a partition $(A_1, \ldots, A_n) \in \Pi_n(m)$ that satisfies 
\begin{align}
v_i(A_i) \geq \frac{4n-1}{3n} \ \mu_i \qquad \text{ for all } i \in [n].
\end{align}
Here, $\mu_i$ is the maximin share of agent $i$. 
\end{theorem}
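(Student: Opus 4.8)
The plan is to mirror the two-part strategy used for goods in Section~\ref{sect:additive}: first reduce an arbitrary chores instance to an \emph{ordered} one in which all agents rank the chores identically, and then design an envy-graph algorithm for ordered instances whose output meets the bound. For the reduction I would establish the exact chores analog of Theorem~\ref{thm:Bouvere1t}. Order each agent's chores from worst (most negative) to least bad, set $v'_{i,j}$ to be agent $i$'s $j$th least-bad value, and run the selection procedure of $\ALGBL$ in which the active agent always picks a remaining chore of \emph{maximum} value. Exactly as in the goods case, after $j-1$ chores have been removed at least one of agent $i$'s top-$j$ (least-bad) chores survives, so the chore she receives has value at least $v'_{i,j}$; summing over her bundle gives $v_i(A_i)\ge v'_i(A'_i)$. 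Crucially this telescoping argument is insensitive to the sign of the values, so it carries over verbatim, and since the maximin share depends only on the multiset of values (Definition~\ref{defn:maxmin:chores}), it suffices to prove the bound on ordered instances.

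For ordered instances I would allocate the chores in order, worst chore first, handing the current chore at each step to an agent who currently \emph{envies no one} (a sink of the acyclic envy graph), and resolving any cycle created via the reallocation of Lemma~\ref{lemma:envy}. The point of the worst-first order combined with the sink choice is to restore the chores analog of the EFX property of Lemma~\ref{lemma:efx}: removing the chore just added (which, being processed last, is the least-bad chore in the recipient's bundle) leaves the recipient no worse off than any other agent, so no envy-up-to-the-least-bad-chore violation is introduced, while every other agent's value for the recipient's bundle only drops. I would then run the same threshold-and-majorization accounting as in the proof of Theorem~\ref{thm:main}: fix the cut-off index at the first chore whose value exceeds $\tfrac12 v_1(A_1)$ (both quantities negative), collect the worst chores into a set $H$, and argue---via the size-at-most-two structure of the partial allocation on $H$ (the analog of Claim~\ref{claim:H-ord}) together with the majorization statement obtained from Proposition~\ref{prop:maj} (the analog of Claim~\ref{claim:maj})---that every bundle is comparable to a maximin bundle. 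Combining the per-bundle estimate for bundles containing a least-bad chore (the analog of Claim~\ref{claim:propo}, which after reversing the inequalities for negative values takes the form $v_1(A_1)\ge \tfrac43 v_1(A_i)$) with the majorization bound on the remaining bundles should yield $v_1(A_1)\ge \tfrac{4n-1}{3n}\mu_1$, and the identical argument applies to every agent.

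The main obstacle is the sign flip in the EFX bookkeeping. For goods the EFX inequality bounds $v_i(A_i)$ from \emph{below}, so it survives the cycle-elimination swap of Lemma~\ref{lemma:envy} for free: each agent's value only increases, and the pre-swap inequalities already control $v_i$ against every (permuted) bundle. For chores the invariant needed to handle the recipient of a new chore instead bounds $v_i(A_i\setminus\{g\})$---a quantity attached to the agent's \emph{own} bundle---and when Lemma~\ref{lemma:envy} hands an agent a different bundle this quantity is no longer governed by the pre-swap inequalities. Reconciling these two requirements is the delicate step: I would pin down the precise envy-up-to-one-chore invariant that is simultaneously (i)~restored when a worst-first chore is given to a sink and (ii)~preserved by the cycle-resolving reallocation, most plausibly by phrasing the invariant in terms of the bundle each agent ends up holding and checking directly that a cyclic swap, which strictly improves every participant's value, cannot create a new least-bad-chore violation. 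The remaining work---recomputing the threshold constants with negative values so that the ratio comes out to exactly $\tfrac{4n-1}{3n}$---is routine but must be carried out with care about the direction of every inequality.
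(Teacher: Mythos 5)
Your reduction to ordered instances and your algorithm (worst chore first, given to a sink of the envy graph, with cycle resolution) coincide exactly with the paper's Algorithm~\ref{alg:envygraph-chores}, and your observation that the Bouveret--Lema\^{i}tre argument is sign-insensitive is exactly how the paper handles part~(i). The gap is in the analysis: the step you yourself flag as ``delicate''---a chores analog of EFX that is preserved under the cycle-resolving swaps of Lemma~\ref{lemma:envy:chores}---is precisely where the plan breaks, and the paper never attempts it. For goods, EFX survives cycle elimination for free because the dropped good lives in the \emph{other} agent's bundle: in $v_i(A_i)\geq v_i(A_j\setminus\{g\})$ the left side only grows under a swap and the right side ranges over the same multiset of bundles. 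For chores the dropped chore lives in the evaluator's \emph{own} bundle; after a swap, agent $i$ holds a new bundle whose least-bad chore may have arbitrarily small magnitude, so $v_i(B_i\setminus\{g\})\approx v_i(B_i)$ and the invariant would require near envy-freeness, which the swap (guaranteeing only $v_i(B_i)\geq v_i(A_i)$) does not supply. Moreover, even granting such an invariant, your analog of Claim~\ref{claim:propo} does not follow from it: chores-EFX yields $|v_1(A_1)|\leq |v_1(A_i)|+|v_1(g)|$ with $g\in A_1$ rather than $g\in A_i$, and since $A_i$ may be a singleton of tiny magnitude, $|v_1(g)|$ cannot be bounded by a constant fraction of $|v_1(A_i)|$; the threshold-set-$H$/majorization decomposition therefore does not port.

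The paper's actual proof of Theorem~\ref{thm:main-chores} abandons both EFX and majorization. It fixes the \emph{last} chore $g_r$ given to agent~$1$: at that moment she is a sink, hence $|v_1(A_1)|\leq|v_1(A'_1)|\leq |v_1(A_i)|+|v_1(g_r)|$ for every $i$, and \emph{averaging} over all $n$ bundles (using $\sum_{i} v_1(A_i)=v_1([m])\geq n\mu_1$) gives $|v_1(A_1)|\leq |\mu_1|+\left(1-\frac{1}{n}\right)|v_1(g_r)|$, which yields the $\frac{4n-1}{3n}$ bound whenever $|v_1(g_r)|\leq\frac{1}{3}|\mu_1|$. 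In the complementary case $|v_1(g_r)|>\frac{1}{3}|\mu_1|$, pigeonhole arguments show $r\leq 2n$ and $|A'_1|\leq 2$ (the latter via the counting function $f(j)=\myfloor{|v_1(g_j)|/|v_1(g_r)|}$), and an LPT-style result (Lemma~\ref{lemma-LPT-chores}: when the $d\leq 2n$ worst chores each exceed $|\mu|/3$ in magnitude, the explicit singleton-plus-pairing partition achieves the maximin share) shows the allocation is then \emph{exactly} maximin fair for agent~$1$. So the correct threshold is on $|v_1(g_r)|$ against $\frac{1}{3}|\mu_1|$, not on chore values against $\frac{1}{2}v_1(A_1)$, and a single global averaging step replaces your per-bundle bookkeeping. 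To complete your write-up you would need to replace the EFX/majorization scaffolding with this last-chore averaging argument plus the structural analysis of the large-chore case.
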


The proof of Theorem~\ref{thm:main-chores} proceeds in two parts:
\begin{enumerate}
\item[(i)] First, we observe that the problem of finding a partition that provides approximate maximin fair allocations can be reduced to a restricted setting wherein the agents value the chores in the same order. This follows from the observation that the arguments used in Section~\ref{section:bouveret} hold even in the case of chores--the arguments in Section~\ref{section:bouveret} rely on the additivity of the valuations and do not require the goods to be positively valued. Hence, we can solely focus on the case in which the $m$ chores can be ordered (indexed), say $g_1, g_2, \ldots, g_m$, such that the valuation of every agent $i$ respects this ordering: for each  $a < b$ we have $ v_{i} (g_a) \leq v_{i} (g_b)$.
\item[(ii)] Then, we develop an efficient algorithm that achieves a $4/3$-approximation guarantee for this restricted/ordered setting. 
\end{enumerate}

\subsection{Envy Graph Algorithm for Chores}
Since we only need to address the setting in which the agents value the chores in the same order, throughout this section we will assume that the chores are indexed, say $g_1, g_2, \ldots, g_m$, such that for every agent $i$ we have $ v_{i} (g_a) \leq v_{i} (g_b)$ (i.e., $ |v_{i} (g_a)| \geq |v_{i} (g_b)|$), for all indices $ a<b$. 

Our approximation algorithm iteratively allocates the chores in increasing order of their indices and maintains a partial allocation, $\A=(A_1, \ldots, A_n)$, of the chores assigned so far. In order to assign a chore the algorithm selects a bundle, $A_i$, by considering a directed graph, $G(\A)$, that captures the envy between agents. The nodes in this envy graph represent the agents and it contains a directed edge from $i$ to $j$ iff $i$ envies $j$, i.e., $v_i(A_i) < v_i(A_j)$. 

We restate Lemma~\ref{lemma:envy} for the case of chores; the proof of the following lemma is identical to that of Lemma~\ref{lemma:envy}. 

\begin{lemma}[\cite{lipton-envy-graph}]
	\label{lemma:envy:chores}
	Given a partial allocation $\A=(A_1, \ldots, A_n)$ of a subset of chores $S \subseteq [m]$, we can find another partial allocation $\B=(B_1, \ldots, B_n)$ of $S$ in polynomial time such that 
	\begin{enumerate}
		\item[(i)] The valuations of agents for their bundles do not decrease: $v_i(B_i) \geq v_i(A_i)$ for all $i \in [n]$.
		\item[(ii)] The envy graph $G(\B)$ is acyclic.
	\end{enumerate}
\end{lemma}

{
\begin{algorithm}
{
	{\bf Input :} $n$ agents, $m$ indivisible chores, and valuations $v_{i,g}$ for each $i \in [n]$ and $g \in [m]$. In the given instance, the agents value the chores in the same order. \\ {\bf Output:} An approximate maximin fair allocation.
\caption{Envy Graph Algorithm $\ALGC$}
\label{alg:envygraph-chores}
\begin{algorithmic}[1]
  
\STATE Order the chores, $g_1, g_2, \ldots, g_m$, such that for every agent $i \in [n]$ and $a<b$ we have $|v_{i g_a}| \geq |v_{i g_b}|$.
\STATE Initialize allocation $\A=(A_1, A_2, \ldots, A_n)$ with $A_i = \emptyset$ for all $ i \in [n]$.
\FOR{$j =1$ to $m$}
\STATE Pick a vertex $i$ that has no outgoing edge in the envy graph $G(\A)$, i.e., $i$ is a \emph{sink vertex} in $G(\A)$. \\
\COMMENT{The algorithm maintains the invariant that $G(\A)$ is acyclic. Hence, such a vertex is guaranteed to exist.}
\STATE Update $A_i \leftarrow A_i \cup \{ g_j \}$.
\IF{the current envy graph $G(\A)$ contains a cycle} 
\STATE Use Lemma~\ref{lemma:envy:chores} to update $\A$ and, hence, obtain an acyclic envy graph. 
\ENDIF
\ENDFOR 
\STATE Return $\A$.
\end{algorithmic}
}
\end{algorithm}
}

\subsection{Proof of Theorem~\ref{thm:main-chores}}
It is clear that the $\ALGC$ runs in polynomial time. To establish the stated claim, in the remainder of this proof we will show that $v_1(A_1) \geq \frac{4n-1}{3n} \mu_1$; an analogous argument establishes the desired bound, $v_i(A_i) \geq \frac{4n-1}{3n}\mu_i$, for all other agents.

Write $\A=(A_1, \ldots, A_n)$ to denote the allocation returned by the algorithm $\ALGC$. Let $g_r$ be the last chore assigned to agent $1$ in the for-loop of the algorithm, i.e., agent $1$ was selected as a sink node in the $r$th iteration. Also, note that $g_r$ may not lie in the final bundle $A_1$---since the bundle of agent $1$ can get swapped while applying Lemma~\ref{lemma:envy:chores}. Let $\A'=(A'_1, \ldots, A'_n)$ be the allocation just after the chore $g_r$ is allocated to agent $1$; we have $g_r \in A'_1$. 

Since $g_r$ is the last chore allocated to agent $1$, Lemma~\ref{lemma:envy:chores} gives us $v_1(A_1) \geq v_1(A'_1)$ (i.e., $|v_1(A_1)| \leq |v_1(A'_1)|$). In addition, the fact that the chores are only allocated to sinks implies that when $g_r$ was allocated to agent $1$ it must have been the case that she did not envy any other agent. That is,  
$|v_1(A_1)| \leq |v_1(A'_1)| \leq |v_1(A_i)| + |v_1(g_r)|$ for all $i \neq 1$. Summing this inequality over $i$, we get:

\begin{align*}
	|v_1(A_1)| &\leq \frac{1}{n}\left[|v_1(A_1)| + \sum_{j \neq 1} (|v_1(A_j)| + |v_1(g_r)|)\right]\\
	&\leq \frac{1}{n}\left(n|\mu_1| + (n-1)|v_1(g_r)|\right)\\
	&\leq |\mu_1| + (1-\frac{1}{n})|v_1(g_r)|
\end{align*}

Here, the second inequality follows from the fact that $\sum_{i=1}^n v_1(A_i) = v_1([m]) \geq n\mu_1$ and all the involved quantities ($v_i(A_i)$, $v_1([m])$, and $\mu_1$) are nonpositive.

Suppose $|v_1(g_r)|\leq \frac{1}{3}|\mu_1|$. Then,
\begin{align*}
|v_1(A_1)| &\leq |\mu_1| + (1-\frac{1}{n})|v_1(g_r)|\\
&\leq |\mu_1| + \frac{1}{3}(1-\frac{1}{n})|\mu_1|\\
& = \frac{4n-1}{3}|\mu_1|
\end{align*}

Therefore, to prove Theorem~\ref{thm:main-chores}, we only need to consider the case when the last chore assigned to agent $1$ has value less than $\frac{1}{3}\mu_1$, i.e., $|v_1(g_r)|> \frac{1}{3}|\mu_1|$; the rest of the proof addresses this case. 

In particular, we will use the inequality $|v_1(g_r)|> \frac{1}{3}|\mu_1|$ to prove that the index $r$ is at most $2n$. Suppose for contradiction $r > 2n$, then in any partition $\Pa=(P_1,P_2,\dots,P_n)$ there exists a bundle $P_k$ with 3 chores $\{g_{i_1},g_{i_2},g_{i_3}\} \subseteq P_k$ such that $i_1,i_2,i_3 \leq r$ (by the pigeonhole principle). Therefore, $v_1(P_k) \leq v_1(\{g_{i_1},g_{i_2},g_{i_3}\}) \leq 3v_1(g_r) < \mu_1$. In particular, this is true for partitions that achieve maximin share under the valuation $v_1$, which contradicts the definition of $\mu_1$.

The proof of Lemma~\ref{lemma-LPT-chores} follows from the arguments in~\cite{graham-makespan} and we provide it here for completeness. 

\begin{lemma} \label{lemma-LPT-chores}
	Consider an additive valuation function $v$. Let $\{ p_1,p_2,\ldots, p_d \}$ be a set of $d$ chores to be partitioned among $n$ agents, with $ d \leq 2n$ and $v(p_1) \leq v(p_2) \leq \ldots v(p_d) \leq 0$. If $|v(p_d)| > |\frac{\mu}{3}|$---where $\mu$ is the $n$-maximin share under the valuation function $v$---then, the partition \begin{align*} \Pa & = \left( \{p_1\},\{p_2\},\dots,\{p_{2n-d}\},\{p_{2n-d+1},p_d\},\{p_{2n-d+2},p_{d-1}\},\dots,\{p_n,p_{n+1}\} \right)\end{align*} achieves the maximin share under the valuation $v$.
\end{lemma}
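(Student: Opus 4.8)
The plan is to recast the statement as a makespan-scheduling question. Writing $w_j := |v(p_j)| = -v(p_j) \ge 0$, the hypotheses give $w_1 \ge w_2 \ge \dots \ge w_d > 0$; let $\mathcal{M}^* = (M_1^*,\dots,M_n^*)$ be a partition achieving the maximin share, so that $\mu = \min_k v(M_k^*) = -\max_k |v(M_k^*)|$ and hence $|v(M_j^*)| \le |\mu|$ for every $j$. Since $v$ is additive and all chores are non-positive, $\Pa$ achieves the maximin share if and only if every bundle $P_k$ of $\Pa$ has weight $\sum_{p\in P_k} w_p \le |\mu|$, because then $\min_k v(P_k) = -\max_k \sum_{p\in P_k} w_p \ge -|\mu| = \mu$. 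So the whole proof reduces to bounding the weight of each bundle of $\Pa$ by $|\mu|$.

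The first step is a structural observation about $\mathcal{M}^*$. The hypothesis $|v(p_d)| > |\mu|/3$ forces $w_j > |\mu|/3$ for \emph{every} chore, so no bundle of $\mathcal{M}^*$ can contain three chores (their weight would exceed $3\cdot|\mu|/3 = |\mu|$, contradicting $|v(M_j^*)| \le |\mu|$). Thus each bundle of $\mathcal{M}^*$ holds at most two chores. Moving a chore from any two-chore bundle into an empty bundle only lowers weights and never raises the makespan, so I may assume $\mathcal{M}^*$ has no empty bundle; combined with $n \le d \le 2n$ this means $\mathcal{M}^*$ consists of exactly $2n-d$ singletons and $d-n$ pairs, the same shape as $\Pa$. (When $d=n$ it is all singletons and the lemma is immediate, so I take $d > n$.)

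Next I would compare $\mathcal{M}^*$ with $\Pa$ by two exchange arguments. First, I claim that without loss of generality the singletons of $\mathcal{M}^*$ are the heaviest chores $\{p_1,\dots,p_{2n-d}\}$: if some heavy $p_i$ (with $i \le 2n-d$) lies in a pair $\{p_i,p_j\}$ while some lighter $p_k$ (with $k > 2n-d$, so $w_k \le w_i$) is a singleton, then replacing these two bundles by $\{p_i\}$ and $\{p_k,p_j\}$ keeps both new weights $\le |\mu|$ (as $w_i \le w_i+w_j \le |\mu|$ and $w_k+w_j \le w_i+w_j \le |\mu|$), so the makespan does not increase; iterating makes all of $p_1,\dots,p_{2n-d}$ singletons. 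After this, $\mathcal{M}^*$ pairs up exactly the set $\{p_{2n-d+1},\dots,p_d\}$, which is precisely the set $\Pa$ pairs. Second, $\Pa$ uses the balanced (nested) pairing of these remaining chores, and I would show it minimizes the maximum pair weight: for any other pairing, two ``crossing'' pairs $\{p_{i_1},p_{i_2}\},\{p_{j_1},p_{j_2}\}$ with $i_1<j_1<i_2<j_2$ can be uncrossed into $\{p_{i_1},p_{j_2}\},\{p_{j_1},p_{i_2}\}$, which preserves the total weight but strictly shrinks the spread of the two pair-sums, so by Proposition~\ref{prop:maj} the resulting multiset of bundle weights is majorized and its maximum cannot increase. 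This is verbatim the uncrossing step of Claim~\ref{claim:maj}, and iterating it reaches $\Pa$.

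Putting these together, the singletons of $\Pa$ coincide with those of (the rearranged) $\mathcal{M}^*$ and each has weight $\le |\mu|$ (any single chore sits inside some bundle of $\mathcal{M}^*$), while the maximum pair-weight of the balanced pairing is at most that of $\mathcal{M}^*$'s pairing, which is $\le |\mu|$. Hence every bundle of $\Pa$ has weight at most $|\mu|$, giving $\min_k v(P_k) \ge \mu$, i.e.\ $\Pa$ achieves the maximin share. I expect the main obstacle to be the bookkeeping in the first exchange argument: one must handle empty bundles and verify that the singleton/pair counts are preserved so that $\mathcal{M}^*$ really attains the same singleton-plus-pairing shape as $\Pa$ on the same ground set. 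Once that is in place, the optimality of the balanced pairing is a direct reuse of the majorization machinery already built around Proposition~\ref{prop:maj}.
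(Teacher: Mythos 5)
Your proof is correct in substance but takes a genuinely different route from the paper's. The paper argues by contradiction: assuming some bundle of $\Pa$ has value below $\mu$, it takes the least index $l \geq n$ for which the pair $\{p_{2n-l+1},p_l\}$ has weight at least $|\mu|$, deduces that in any maximin-achieving partition the chores $p_1,\ldots,p_{2n-l+1}$ must then sit in singleton bundles, and counts: the $2l-2n-1$ chores $p_{2n-l+2},\ldots,p_l$ are forced into the remaining $l-n-1$ bundles, so by pigeonhole some bundle receives three chores and has value at most $3v(p_d) < \mu$ --- the hypothesis $|v(p_d)|>|\mu|/3$ enters only in this last strict inequality. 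You instead use that hypothesis at the outset, to cap every bundle of an optimal partition $\mathcal{M}^*$ at two chores, and then transform $\mathcal{M}^*$ into $\Pa$ through exchanges that never increase the maximum bundle weight, reusing the majorization machinery of Proposition~\ref{prop:maj} as in Claim~\ref{claim:maj}. Your route is more constructive and proves slightly more ($\Pa$ in fact minimizes the maximum bundle weight over all $n$-partitions, rather than merely staying at or above $\mu$), at the price of the exchange bookkeeping you flag --- which does go through: both of your exchanges preserve the counts of $2n-d$ singletons and $d-n$ pairs, and whenever a heavy chore $p_i$ (with $i\leq 2n-d$) sits in a pair, at most $2n-d-1$ of the singletons are heavy, so a light singleton is always available for the swap.

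The one step that needs a patch is the assertion that iterating the uncrossing move ``reaches $\Pa$.'' Uncrossing alone terminates at a \emph{noncrossing} pairing, which need not be the nested one: the pairing $\{p_{2n-d+1},p_{2n-d+2}\},\{p_{2n-d+3},p_{2n-d+4}\},\ldots$ contains no crossing pair yet differs from $\Pa$. You must also resolve \emph{sequential} pairs $\{p_a,p_b\},\{p_c,p_e\}$ with $a<b<c<e$, replacing them by $\{p_a,p_e\},\{p_b,p_c\}$; the same computation as in the crossing case gives $u+u'=v+v'$ and $|u-u'|\leq|v-v'|$, so the maximum pair weight again does not increase (and when equality holds the two weight multisets coincide, so the strict inequality required by Proposition~\ref{prop:maj} is not an obstacle). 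With both moves the process does reach $\Pa$: if the minimum and maximum elements of the paired set lie in distinct pairs, those two pairs necessarily cross or are sequential, and a single move puts the minimum together with the maximum; then recurse on the remaining elements. Note that the paper's own Claim~\ref{claim:maj} records only the crossing exchange as well, so you inherited an imprecision already present there; the fix is identical in both places.
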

\begin{proof}
Suppose, for contradiction, that the partition $\Pa$ does not achieve the maximin share. Hence, there exists is a bundle in $\Pa$ that has value less than $\mu$. Note that every chore must be assigned to some partition, therefore $v(p_1) \geq \mu$. Hence, the bundle in $\Pa$, that has value less than $\mu$ must contain at least two chores. With this observation in hand, define index $\ell \coloneqq \min\{h \mid h \geq n \text{ and } |v(\{p_{2n-h+1},p_h\}) | \geq |\mu| \}$. 

Note that, for all $i \leq (2n- \ell +1)$ and $j\leq \ell$ such that $i\neq j$, we have $|v(\{p_i,p_j\})|=|v(p_i)|+|v(p_j)|\geq |v(p_{2n-\ell+1})|+|v(p_\ell)|=|v(\{p_{2n-\ell+1},p_\ell\})|\geq|\mu|$. 



Therefore, in any partition that achieves the maximin share, say $\mathcal{M}=(M_1, \ldots, M_n)$, the $(2n- \ell +1)$ chores $L \coloneqq \{p_1,p_2,\ldots, p_{2n-\ell+1}\}$ must not paired among themselves\footnote{That is, for all $i \in [n]$, we must have $|M_i \cap L | \leq 1$} or with any chore from the set $S \coloneqq \{p_{2n- \ell +2}, \ldots, p_\ell \}$. Hence, the $(2 \ell - 2 n +1)$ chores in the set $S$ are assigned to the remaining $(\ell-n-1)$ bundles (which do not contain a chore from $L$). This implies, by the  pigeonhole principle, that some bundle in $\mathcal{M}$ contains three chores. Such a bundle would be of value less than $3v(p_d)<\mu$. This, however, contradicts the fact that $\mathcal{M}$ achieves the maximin share $\mu$.

\end{proof}


Recall that $A'_1$ is the bundle assigned to agent $1$ right after the $r$th iteration (i.e., right after agent $1$ receives the chore $g_r$). As mentioned previously, $|v_1(A_1)| \leq |v_1(A'_1)|$. Therefore, it is enough to prove that, $|v_1(A'_1)| \leq |\mu_1|$.

We will now prove that $A'_1$ contains at most 2 chores. Suppose for contradiction $|A'_1| \geq 3$. Therefore, $|v_1(A'_1 \setminus \{g_r\})| \geq 2|v_1(g_r)|$. Recall, agent~1 was assigned the chore $g_r$ only when she did not envy any other agent. Therefore, $|v_1(A'_j)| \geq  2|v_1(g_r)|$ for all $j\in[n]$. Also, recall that the partial allocation $\A'$ is a partition of the chores $\{g_1,g_2,\dots,g_r\}$. Note that by including more chores to be this set, we can only decrease the $n$-maximin share. 

Consider an additive, set function $f$ defined over subsets of $\{g_1,g_2,\dots,g_r\}$; in particular, $f(g_k) \coloneqq \myfloor{\frac{|v_1(g_k)|}{|v_1(g_r)|}}$
and for $S \subseteq \{g_1,g_2,\dots,g_r\}$, we have $f(S) \coloneqq  \sum_{g \in S} f(g)$. By definition, for all subsets $S$, we have $|v_1(S)|\geq f(S) \ |v_1(g_r)|$.  Note that $g_r$ is the highest valued chore among $\{g_1,g_2,\dots,g_r\}$, i.e., $|v_1(g_k)| \geq |v_1(g_r)|$ for all $k\in[r]$. Therefore, we have $f(S) \geq |S|$, for every subset $S$. 

These observations imply that $f(A'_j) \geq 2 $ for all $j \in [n]$: 
\begin{itemize}
	\item If, for $j \in [n]$, we have $|A'_j| \geq 2$, then $f(A'_j) \geq |A'_j| \geq 2$.
	\item Otherwise, if for $j \in [n]$, we have $|A'_j| = 1$, then the fact that $|v_1(A'_j)| \geq 2|v_1(g_r)|$ (i.e., the single chore in $A'_j$ has absolute value at least twice that of $g_r$) gives us $f(A'_j) \geq 2$. 
\end{itemize} 

We have assumed (towards a contradiction) that $|A'_1| \geq 3$, hence $f(A'_1) \geq |A'_1| \geq 3$. Therefore, $f\left( \{g_i\}_{i=1}^r  \right) = \sum_{i=1}^n f(A'_i) \geq 2n+1$. This bound implies that in any partition $\Pa=\{P_1, \ldots, P_n\}$ of $\{g_i\}_{i=1}^r$, there exists a bundle $P_k$ with the property that $f(P_k) \geq 3$. Since $|v_1(g_r)|>\frac{1}{3}|\mu_1|$ and $|v_1(P_k)|\geq f(P_k)\cdot |v_1(g_r)|$, this implies that $v_1(P_k)$ is less than $\mu_1$. In particular, this is true for partitions that achieve maximin share under the valuation $v_1$, which contradicts the definition of $\mu_1$. Therefore, $|A'_1| \leq 2$.

Since $A'_1$ contains at most two chores, we have the following two cases:
\begin{itemize}
	\item Case {\rm I}: $|A'_1| = 1$.  That is, $A'_1 =\{ g_r \}$. Let $\mathcal{M}=\{M_1, \ldots, M_n \}$ be an allocation that achieves the maximin share for agent $1$, i.e., $\mu_1 = \min_j v_1(M_j)$. The chore $g_r$ must be contained in one of the bundles in $\mathcal{M}$, say $g_r \in M_j$. Then, $|v_1(A'_1)|
= |v_1(g_r)| \leq |v_1(M_j)| \leq |\mu_1| $. That is, agent $1$'s value for $A'_1$ is at least its maximin share. 	
	\item Case {\rm II}: $|A'_1| = 2$. Since $\ALGC$ allocates the first $n$ chores to $n$ different bundles, we have $A'_1=\{ g_\ell, g_r\}$, with $\ell \leq n$. Furthermore, we note that index $\ell \geq 2n-r+1$: when agent $1$ was assigned chore $g_r$ she did not envy any other agent. Also, at that point of time agent $1$ owned the singleton $\{g_\ell \}$. These observations imply that, by the time chore $g_r$ was assigned to agent $1$, the $(n-\ell)$ chores $\{g_{\ell+1}, \ldots, g_{n} \}$ (which, recall that, are initially assigned as singletons) must have been paired up with a chore from the $( r - n - 1)$-size set $\{g_{n+1}, \ldots, g_{r-1} \}$; if this is not the case then agent $1$ would envy a singleton bundle $\{g \}$ with $g \in \{g_{\ell+1}, \ldots, g_{n} \}$. This counting argument gives us $n- \ell \leq r - n - 1$, i.e., $\ell \geq 2n - r + 1$. 
	
	Lemma~\ref{lemma-LPT-chores} ensures that the partition $\Pa$ achieves the maximin share (under valuation $v_1$). Since this partition contains the bundle $P_s \coloneqq \{g_{2n-r+1},g_r\}$, we have $|v_1(P_s)| \leq |\mu_1|$. The fact that $\ell \geq 2n - r + 1$ (i.e., $v_1(g_\ell) \geq v_1(g_{2n-r+1})$) gives us the desired inequality $|v_1(A'_1)| = |v_1(\{g_\ell, g_r \})| \leq |v_1(P_s)| \leq |\mu_1|$.
\end{itemize}

Therefore if $|v_1(g_r)|>\frac{1}{3}|\mu_1|$, then the allocation $\A$ is maximin fair (according to agent $1$), i.e., $v_1(A_1) \geq |\mu_1|$. This completes the proof of Theorem~\ref{thm:main-chores}.

\section{Examples} \label{Examples}
\subsection{EF1 does not imply Maximin Fair} \label{example:ef1:mms}
The following simple example shows that an EF1 allocation may not have a constant-factor maximin guarantee. 

Consider an instance with $n$ agents, $2n-1$ goods, and identical valuations $v$: $v(j)=1$ for all $ 1 \leq j \leq n$ and $v(j)=n$ for all $n+1\leq j \leq 2n-1$. 
Note that allocation $\mathcal{A} =(A_1,A_2,\ldots, A_n)= (\{1\},\{2,n+1\},\{3,n+2\},\ldots,\{n,2n-1\})$ is EF1. 
Furthermore, considering allocation $\mathcal{B} =(B_1,B_2,\dots,B_n)= (\{1,2,\dots,n\},\{n+1\},\{n+2\},\dots,\{2n-1\})$ we get that the maximin share, $\mu$, under the valuation $v$ is at least $n$: $\min_{i \in [n] }  v(B_i) = n$.  

But, given that $v(A_1) = 1$, we get an EF1 allocation which does not have a constant factor maximin guarantee. 

\subsection{Nonexistence of Maximin Fair Allocation} \label{example:mms:existance}
The following simple example shows an instance where valuations are submodular and a $(0.75 + \epsilon)$ approximate maximin fair allocation does not exist (for any $\epsilon>0$).

Consider an instance with 2 agents, 4 goods ($\{a_1,a_2,b_1,b_2\}$). Agent~1's valuation is denoted by $v_1$ and defined by: $v_1(S)=1$ for all $|S|=1$; $v_1(\{x_1,x_2\})=2$ for all $x \in \{a,b\}$; $v_1(\{a_i,b_j\})=1.5$ for all $i,j\in\{1,2\}$; $v_1(S)=2.5$ for all $S$ such that $|S|=3$; $v_1(\{a_1,a_2,b_1,b_2\})=3$. Agent~2's valuation is denoted by $v_2$ and defined by: $v_2(S)=1$ for all $|S|=1$; $v_2(\{a_j,b_j\})=2$ for all $j \in \{1,2\}$; $v_2(\{x_1,y_2\})=1.5$ for all $x,y\in\{a,b\}$; $v_2(S)=2.5$ for all $S$ such that $|S|=3$; $v_2(\{a_1,a_2,b_1,b_2\})=3$. One can easily check the submodularity of $v_1$ and $v_2$.

According to agent~1, there is only one unique maximin fair partition $\Pa=(\{a_1,a_2\},\{b_1,b_2\})$ and agent~1's maximin share is 2. Similarly, according to agent~2, there is only one unique maximin fair partition $\Pa'=(\{a_1,b_1\},\{a_2,b_2\})$ and agent~2's maximin share is also 2. For any allocation $\A$, if agent~1's value for his own bundle is at least 2, then agent~2's value for his own bundle is atmost 1.5. And, for any $S\subseteq\{a_1,a_2,b_1,b_2\}$ and $i\in\{1,2\}$, we know that $v_i(S)\geq 2$ or $v_i(S)\leq 2$. Therefore, in this instance, it is not possible to construct a $(0.75+\epsilon)$ approximate maximin fair allocation (for any $\epsilon>0$). 

It is relevant to note that if we have two agents with additive valuations, then a maximin fair allocation always exists. This example shows that such a guarantee does not hold for submodular valuations.

\end{document}